\documentclass[10pt,a4paper]{article}

\usepackage[margin=1in]{geometry}
\usepackage{amsmath,amssymb,amsthm,mathtools}
\usepackage{booktabs,array,enumitem,longtable}
\usepackage[T1]{fontenc}
\usepackage{graphicx}
\usepackage{algorithm}
\usepackage{algpseudocode}
\usepackage{hyperref}
\usepackage{authblk}
\hypersetup{
    colorlinks=true,
    linkcolor=black,
    citecolor=blue,
    urlcolor=blue,
    pdftitle={Artificial Rosetta Stone: Constrained Maximum A Posteriori (MAP) Reconstruction of Symbolic Raga Sequences via Order-k Markov Models}
}

\newtheorem{definition}{Definition}[section]
\newtheorem{theorem}[definition]{Theorem}
\newtheorem{proposition}[definition]{Proposition}

\newtheorem{corollary}[definition]{Corollary}

\newcommand{\A}{\mathcal A}
\newcommand{\R}{\mathcal R}
\newcommand{\F}{\mathcal F}
\newcommand{\E}{\mathbb E}
\newcommand{\Pp}{\mathbb P}
\newcommand{\KL}{D_{\mathrm{KL}}}
\DeclareMathOperator*{\argmax}{arg\,max}
\DeclareMathOperator*{\argmin}{arg\,min}

\title{\textbf{Artificial Rosetta Stone:}\\[2mm]Constrained Maximum A Posteriori (MAP) Reconstruction of Symbolic Raga Sequences via Order-k Markov Models}

\author[1]{Saanvi Raghavendran\thanks{First author: saanviraghavendran@gmail.com}\thanks{GitHub Repository link: \href{https://github.com/mathacker23/ArtificialRosettaStone}{https://github.com/mathacker23/ArtificialRosettaStone}}}
\author[2]{Abhishek Bhattacharjee\thanks{Second \& Corrosponding author: abhishek@theabstractmath.com}}

\affil[1,2]{Abstract Math Institute}
\affil[1]{\texttt{saanviraghavendran@gmail.com}}
\affil[2]{\texttt{abhishek@theabstractmath.com}}
\date{August 2026}

\begin{document}

\maketitle

\begin{abstract}
Reconstructing a damaged musical fragment is an inverse problem: the observed sequence contains partial information about an unknown sequence, while a raga encodes structural constraints that can limit the set of allowable completions. This paper formalizes a mathematical framework for the first task, and proposes the Artificial Rosetta Stone (ARS) to that end. The rewritten formulation very clearly classified into three claims that are often conflated in the literature: a symbolic sequence can be reconstructed under a probabilistic model; a sequence can be made consistent with an explicitly encoded raga grammar; and a historical performance can be authenticated. Only the first two are supported by the present mathematics.

We model a raga as a finite symbolic alphabet plus a constraint system, and use an order-$k$ Markov model to encode context-dependent melodic probabilities. A symmetric Dirichlet prior leads to an analytically tractable posterior and a smoothed estimator. We reinterpret missing-note reconstruction as a constrained maximum-a-posterior problem. With fixed length and finite-order local constraints, the optimization problem admits an exact dynamic-programming solution, with time complexity $O(TN^{k+1})$ in the dense worst case and lower complexity for sparse grammars. We derive the parameter count $N^k(N-1)$, prove a concentration bound under explicit mixing assumptions, and analyze the propagation of estimation error into sequence log-likelihoods.

A reproducible synthetic experiment is specified using six raga-inspired symbolic alphabets, orders $k\in\{1,2,3\}$, and missingness rates of $10\%$, $30\%$, and $50\%$. The experiment is explicitly a proof of concept and is not evidence of historical reconstruction. In addition, a limited real-audio feasibility evaluation is performed on 42 publicly available Yaman clips from the RagaVeda repository: 30 usable symbolic sequences are evaluated after automated pitch extraction, tonic estimation, segmentation, and quantization. The real-data study is explicitly a feasibility pilot rather than expert-validated or archival reconstruction, since the source has no documented license or provenance and the symbolic reference is generated by the same automated transcription pipeline. All numerical claims are tied to their stated experimental conditions and are not asserted as universal properties of Hindustani music. Code for the estimator, decoder, and both experiments is publicly available at \url{https://github.com/mathacker23/ArtificialRosettaStone}.

\end{abstract}

\noindent\textbf{Keywords:} Hindustani music; raga; sequence reconstruction; Markov chains; Dirichlet estimation; constrained MAP inference; dynamic programming; statistical learning; computational musicology.
\newpage
\tableofcontents
\newpage
\section{Introduction}
\subsection{Background and motivation}
Indian classical music is not adequately represented by an unordered pitch scale. In Hindustani practice, raga identity emerges from the interaction of pitch material, characteristic movements, phraseology, emphasis, intonation, ornamentation, register, and performance convention \cite{jairazbhoy1971,chakraborty2014}. This distinction is central to any computational reconstruction system. A sequence containing the correct seven pitch names can still be musically incompatible with a raga, while two performances containing the same nominal svaras can differ in contour and ornament.

The practical motivation for reconstruction is archival rather than mythological. Recordings can be incomplete, noisy, poorly segmented, transcribed with uncertainty, or missing portions of a performance. The scientific task is therefore to infer plausible latent structure from partial observations. The output should be interpreted as a model-dependent estimate, not as a claim that the computer has recovered an objectively unique historical truth.

The Artificial Rosetta Stone is named by analogy with the role of a known formal system in decoding incomplete information. Here, an explicitly specified raga constraint system supplies structure, while statistical estimation supplies probabilities over possible continuations. The analogy is useful only at this level: musical traditions are not ciphers, and a raga grammar is not a translation key.

The principal mathematical simplification in this paper is to begin with symbolic melodies, following the use of symbolic representations in computational musicology and transcription research \cite{chakraborty2014,raphael2002}. Each event is represented by an element of a finite alphabet. This makes the reconstruction problem a finite-state optimization problem and allows exact results. Continuous pitch trajectories, timing, dynamics, timbre, and gamaka contours are discussed as extensions rather than being claimed to be solved by the symbolic model.

The paper's contribution is thus narrow and strong. Rather than combining many machine-learning modules and attaching a convergence theorem afterward, we begin with a precise probability model, derive the estimator, state the admissible set, prove the decoder's optimality, quantify statistical uncertainty, and then explain which richer components could be added without changing the logical core.

\subsection{Research questions and contributions}
We study four questions. First, how should a raga-conditioned symbolic reconstruction problem be defined so that the model does not confuse scale membership with grammatical validity? Second, what is the exact statistical estimator for an order-$k$ transition model under sparse data? Third, can missing symbols be reconstructed exactly rather than approximately for the finite-state model? Fourth, what can be guaranteed from finite training data, and what remains an empirical or cultural validation question?

The contributions are deliberately theorem-oriented.
\begin{enumerate}[leftmargin=2em]
\item We define a finite symbolic reconstruction problem through an alphabet $\A_R$ and an admissible set $\F(O,\R)$, making observations and hard constraints explicit.
\item We derive the free parameter count $d_k=N^k(N-1)$ for an order-$k$ Markov chain and distinguish the Dirichlet posterior mean from the posterior mode.
\item We formulate constrained MAP reconstruction and prove that dynamic programming returns a global optimum when the constraints have finite memory at most $k$ (or are augmented into a finite state).
\item We give an explicit finite-sample bound under a stated independent-sampling or mixing assumption and show the dependence on alphabet size, order, and sample size.
\item We design a reproducible synthetic experiment that measures missing-symbol recovery, reports uncertainty, and avoids treating synthetic data as evidence about historical performance.
\item We provide a limited real-audio feasibility evaluation on Yaman recordings, explicitly separating automated transcription self-consistency from expert-validated musicological accuracy.
\end{enumerate}

The resulting framework is intentionally modular. A future continuous-pitch model can replace the symbolic emissions; a hidden phrase-state model can add latent context; and a neural generator can provide a learned prior. None of those additions should be allowed to silently change the meaning of the reconstruction guarantee. The exact theorem in this paper concerns the explicitly defined finite-state model.

\subsection{Scope and terminology}
We use ``fragment reconstruction'' rather than ``reconstruction of an obsolete raga.'' A raga is a living musical concept, while recordings, compositions, stylistic variants, and individual performance fragments can become unavailable or poorly documented. We focus on Hindustani symbolic melody and do not claim that the same grammar can be transferred directly to Carnatic music. Cross-tradition generalization requires separate musicological definitions.

\section{Related Work}
\subsection{Computational musicology}
Statistical modeling of musical sequences has a long history, including Markov and probabilistic automata models for melodic structure \cite{schulze2011,nierhaus2009}. Such models are attractive for reconstruction because they transform a qualitative notion of context into conditional probability. Their weakness is equally clear: a low-order chain only remembers the context encoded in its state. If a phrase rule depends on a longer pattern, a first-order model cannot represent it without state augmentation.

Work on computational analysis of Hindustani music demonstrates that statistical transition structure contains information useful for raga identification. This supports the basic premise of the present study: symbolic sequence statistics can encode meaningful structure. It does not, however, imply that a classifier can automatically generate a valid raga performance. Classification and reconstruction are different inverse problems.

\subsection{Markov models and algorithmic generation}
Variable-order Markov systems have been used in algorithmic composition because they offer an interpretable compromise between local dependence and model complexity \cite{schulze2011,nierhaus2009}. Their main mathematical difficulty is sparsity. For alphabet size $N$, increasing the order from $k$ to $k+1$ multiplies the number of contexts by $N$. The parameter count therefore grows exponentially in order, motivating principled order selection and smoothing.

Related probabilistic N-gram models have likewise been used to represent local musical dependencies \cite{scholz2009}.

The ARS formulation retains this transparent probabilistic core but changes the goal from free generation to conditional reconstruction. The model is not asked merely to generate plausible melodies; it must preserve observed symbols and satisfy explicitly encoded constraints.

\subsection{HMMs and latent phrase structure}

Finite-state and generative accounts of hierarchical musical structure motivate representing local transitions separately from longer-range organization \cite{lerdahl1983}.
Hidden Markov models can represent latent musical states whose identities are not directly observed. In a raga setting, a state may represent a phrase role, register, or other context. However, a hidden state is not automatically a truth in terms of music: its interpretation must be validated by annotation. In Here a HMM is therefore treated as an optional extension rather than as a source of unproved grammatical guarantees.

\subsection{Neural generative models}
Variational autoencoders and hierarchical sequence models can capture long-range structure that is difficult to express with a short Markov context \cite{kingma2014,roberts2018}. Their strength is representation learning. Their weakness for the present problem is that statistical plausibility does not imply satisfaction of hard constraints. A decoder can assign nonzero probability to a sequence that violates an explicitly encoded rule. A penalty term can reduce such violations, but a penalty is not a hard guarantee unless the optimization is constrained accordingly.

\subsection{Why this framework is different}
The methodological gap addressed here is not simply the absence of a particular neural architecture. It is the absence of a clean separation between (i) the probability model, (ii) the admissible grammar, (iii) the observation mechanism, (iv) the optimization algorithm, and (v) the statistical assumptions behind any guarantee.

This separation has practical consequences. Suppose a model produces a completion with high probability but one transition is forbidden by the chosen grammar. There are two logically distinct responses: assign the transition a low probability, or prohibit it. The first is a soft statistical preference; the second is a hard constraint. A reconstruction paper should state which one is being used. Likewise, if a model achieves 0.83 accuracy on a synthetic corpus, that number is meaningful only relative to the synthetic generator, missingness mechanism, estimator, and test distribution.

Here I treat the Markov model as the theorem-bearing core and treats HMMs, VAEs, continuous pitch, and ornament models as possible extensions. It is simply a claim that one should first prove what the simple model does before attributing a result to a larger model.

\begin{table}[h]
\centering
\caption{Conceptual distinction between model components.}
\vspace{2mm}
\begin{tabular}{p{0.23\linewidth}p{0.29\linewidth}p{0.36\linewidth}}
\toprule
Component & Mathematical role & What it cannot establish by itself\\
\midrule
Alphabet $\A_R$ & Defines symbolic vocabulary & Full raga identity\\
Markov kernel & Local conditional probability & Historical authenticity\\
Grammar $\R$ & Admissibility constraints & Probability of a sequence\\
Dynamic program & Exact finite-state optimization & Correctness of the assumed model\\
Neural prior & Flexible learned distribution & Hard grammatical validity without constraints\\
Synthetic test & Reproducible proof of concept & Performance on archival recordings\\
\bottomrule
\end{tabular}
\end{table}

This table states that every mathematical object has a defined job, and no object is credited with conclusions outside its assumptions.

\section{Mathematical Foundations}
\subsection{Symbolic state space}
Fix a raga $R$ and let
\[
\A_R=\{a_1,\ldots,a_N\}
\]
be the finite symbolic alphabet used by the reconstruction system. The symbols can represent normalized svara labels, register-specific symbols, or another explicitly chosen discrete representation. We do not identify $\A_R$ with the complete ontology of a raga. It is the alphabet on which the present probabilistic model operates.

A melody of fixed length $T$ is an element $x=(x_1,\ldots,x_T)\in\A_R^T$. Let $M\subseteq\{1,\ldots,T\}$ be the missing positions. The observation $O=(O_1,\ldots,O_T)$ satisfies $O_t=x_t$ for $t\notin M$ and is undefined at $t\in M$.

\begin{definition}[Raga constraint system]
A raga constraint system is a pair $\R=(\A_R,\mathcal C_R)$, where $\mathcal C_R$ is a collection of explicitly encoded conditions on sequences. A sequence is admissible if it satisfies every condition in $\mathcal C_R$.
\end{definition}

The conditions may include forbidden transitions, permitted transitions, ascent/descent context, phrase templates, or finite-state representations of more complicated rules. Importantly, each condition must be specified before it is used as a mathematical constraint. Vague statements such as ``sounds like Yaman'' are not directly executable constraints.

\begin{definition}[Feasible completion set]
For observation $O$ and constraint system $\R$, define
\[
\F(O,\R)=\{x\in\A_R^T:x_t=O_t\;\forall t\notin M,\ x\text{ satisfies }\mathcal C_R\}.
\]
\end{definition}

If $\F(O,\R)=\varnothing$, the system has detected a conflict between the observation and the grammar. That event should not be hidden by smoothing or beam search. It is a legitimate diagnostic outcome.

\subsection{Order-$k$ Markov chains}
\begin{definition}[Order-$k$ Markov model]
For $k\ge1$, an order-$k$ Markov model on $\A_R$ is a row-stochastic tensor $P\in[0,1]^{N^k\times N}$ satisfying
\[
P_{c,a}=\Pp(X_t=a\mid X_{t-k:t-1}=c),\qquad \sum_{a\in\A_R}P_{c,a}=1,
\]
for every context $c\in\A_R^k$.
\end{definition}

The order-$k$ assumption is the conditional independence statement
\[
\Pp(X_t\mid X_1,\ldots,X_{t-1})=\Pp(X_t\mid X_{t-k},\ldots,X_{t-1}).
\]
It does not assert that musical cognition is Markovian. It is a modeling approximation whose adequacy must be tested.

\begin{proposition}[Parameter count]
An unconstrained order-$k$ Markov kernel over an alphabet of size $N$ has exactly
\[
d_k=N^k(N-1)
\]
free parameters.
\end{proposition}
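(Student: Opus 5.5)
The plan is to identify the parameter space of an unconstrained order-$k$ kernel with a product of probability simplices and count the dimension of that product. By the definition of an order-$k$ Markov model, $P$ is determined by its rows $P_{c,\cdot}$, one for each context $c\in\A_R^k$. There are exactly $|\A_R^k|=N^k$ contexts. \emph{Unconstrained} means that no row is linked to any other and no entry is forced to zero. Hence the set of admissible kernels is the Cartesian product
\[
\prod_{c\in\A_R^k}\Delta^{N-1},\qquad \Delta^{N-1}=\Bigl\{p\in[0,1]^N:\textstyle\sum_{a}p_a=1\Bigr\}.
\]

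The first step is to show that each factor contributes exactly $N-1$ free parameters. I would use the explicit coordinate map $p\mapsto(p_{a_1},\ldots,p_{a_{N-1}})$. It is a bijection from $\Delta^{N-1}$ onto the set
\[
\Bigl\{q\in[0,1]^{N-1}:\textstyle\sum_i q_i\le 1\Bigr\},
\]
with inverse given by setting $p_{a_N}=1-\sum_i q_i$. This target set has nonempty interior in $\mathbb{R}^{N-1}$, since it contains the point $q_i=1/(2N)$ together with a small ball around it. So the $N-1$ coordinates really are free: they vary independently in an open set.

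The bound cannot be improved. The single linear constraint $\sum_a p_a=1$ cuts the dimension of $\mathbb{R}^N$ by exactly one, and the simplex $\Delta^{N-1}$ spans an affine hyperplane of dimension $N-1$. Any $N-2$ coordinates therefore cannot parametrize it injectively. One way to see this is that $\Delta^{N-1}$ contains $N$ affinely independent vertices, the unit vectors, and affine independence of $N$ points is impossible in dimension $N-2$.

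The second step is to take the product over contexts. The dimension of a product of sets with nonempty interior in their affine hulls is additive, which gives
\[
d_k=\sum_{c\in\A_R^k}(N-1)=N^k(N-1).
\]
The only point needing care is what "free parameters" means. I will define it as the dimension of the affine hull of the parameter set, equivalently the number of coordinates in a minimal smooth injective parametrization with open image. I will note that structural zeros imposed by a grammar $\R$ would reduce this count, which is why the word "unconstrained" appears in the statement. I do not expect any real obstacle: the argument rests only on the simplex-dimension fact and additivity over contexts.
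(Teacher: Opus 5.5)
Your proposal is correct and follows essentially the same route as the paper, which (in the main text and again in the appendix) treats the kernel as a Cartesian product of $N^k$ simplices, each of affine dimension $N-1$, and multiplies. Your explicit chart onto $\{q\in[0,1]^{N-1}:\sum_i q_i\le 1\}$ and your affine-independence argument for the lower bound make rigorous the simplex-dimension step that the paper states in one line.
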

\begin{proof}
There are $N^k$ possible contexts. For each context, the $N$ probabilities form a simplex and therefore have one linear constraint, leaving $N-1$ degrees of freedom. Multiplication gives $N^k(N-1)$.
\end{proof}

This formula is central to model selection. For example, if $N=7$, the free parameter counts for $k=1,2,3,4$ are $42, 294, 2058, 14406$. The fourth-order model therefore has over three hundred times as many free parameters as the first-order model. A claim that higher order is ``better'' without a corresponding discussion of sample size is incomplete.

A sparse grammar can reduce the effective number of transitions, but the reduction must be accounted for in the actual model rather than assumed from the musical label of the raga.

\subsection{Dirichlet Bayesian estimation}
For context $c$, let $C(c,a)$ denote the number of observed transitions from context $c$ to symbol $a$, and $C(c)=\sum_a C(c,a)$. We place the symmetric prior
\[
(P_{c,a})_{a\in\A_R}\sim\operatorname{Dirichlet}(\alpha,\ldots,\alpha),\qquad \alpha>0.
\]
The likelihood contribution is multinomial, so conjugacy gives
\[
(P_{c,a})_a\mid\text{data}\sim\operatorname{Dirichlet}(C(c,a)+\alpha)_a.
\]
Hence the posterior mean is
\begin{equation}
\widehat P_{c,a}^{\,\mathrm{mean}} = \frac{C(c,a)+\alpha}{C(c)+N\alpha}.
\end{equation}
The posterior mode is different: when every posterior parameter exceeds one,
\begin{equation}
\widehat P_{c,a}^{\,\mathrm{MAP}} = \frac{C(c,a)+\alpha-1}{C(c)+N(\alpha-1)}.
\end{equation}
Equation (1), not equation (2), is the natural additive smoothing estimator when $\alpha>0$ is used as a concentration parameter. Calling it ``MAP'' without checking the distinction is mathematically incorrect.

\begin{proposition}[Posterior mean shrinkage]
For every context $c$,
\[
\widehat P_{c,a}^{\,\mathrm{mean}}=(1-\lambda_c)\frac{C(c,a)}{C(c)}+\lambda_c\frac1N,
\qquad \lambda_c=\frac{N\alpha}{C(c)+N\alpha},
\]
when $C(c)>0$.
\end{proposition}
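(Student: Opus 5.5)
The identity is pure algebra on equation (1), so the plan is to rewrite the right-hand side and match it to $\widehat P_{c,a}^{\,\mathrm{mean}}$. The key observation is that the denominator $C(c)+N\alpha$ is shared between the two weights. With $\lambda_c=N\alpha/(C(c)+N\alpha)$ we get
\[
1-\lambda_c=\frac{C(c)}{C(c)+N\alpha}.
\]
Both weights are well defined, since $C(c)+N\alpha>0$ whenever $\alpha>0$.

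Next I substitute these weights into the convex combination. The hypothesis $C(c)>0$ is used exactly here: it makes the empirical frequency $C(c,a)/C(c)$ meaningful. In the first term the factor $C(c)$ cancels, giving
\[
(1-\lambda_c)\frac{C(c,a)}{C(c)}=\frac{C(c,a)}{C(c)+N\alpha}.
\]
In the second term the factor $N$ cancels, giving
\[
\lambda_c\cdot\frac1N=\frac{\alpha}{C(c)+N\alpha}.
\]
Adding the two terms yields $(C(c,a)+\alpha)/(C(c)+N\alpha)$, which is equation (1). This holds for every context $c$ and symbol $a\in\A_R$.

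There is no real obstacle. The only care needed is to record why the hypotheses are needed. The condition $C(c)>0$ is required only so that the maximum-likelihood frequency $C(c,a)/C(c)$ is defined. If $C(c)=0$, then $\lambda_c=1$ and equation (1) reduces directly to the uniform value $1/N$, so the formula remains consistent in the limiting sense. The condition $\alpha>0$ guarantees $\lambda_c\in(0,1)$, which shows the estimator is a genuine convex combination of the empirical frequency and the uniform distribution.

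As a closing remark, I would note that $\lambda_c\to0$ as $C(c)\to\infty$. This is the sense of "shrinkage": the weight on the uniform prior vanishes as data in context $c$ accumulates.
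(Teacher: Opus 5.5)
Your proof is correct and follows the same route as the paper: substitute $\lambda_c$, note that $1-\lambda_c = C(c)/(C(c)+N\alpha)$ and $\lambda_c/N=\alpha/(C(c)+N\alpha)$, and add the two terms to recover equation (1). Your remarks on why $C(c)>0$ is needed, on the $C(c)=0$ case reducing to $1/N$, and on $\lambda_c\in(0,1)$ are accurate additions that the paper leaves implicit.
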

\begin{proof}
Substitute $\lambda_c$ and simplify. The coefficient on the empirical frequency is $C(c)/(C(c)+N\alpha)$ and the coefficient on the uniform distribution is $N\alpha/(C(c)+N\alpha)$.
\end{proof}

The formula explains the behavior of smoothing. Rare contexts are pulled strongly toward uniformity; well-supported contexts are dominated by empirical frequencies. Thus smoothing is not free accuracy: it introduces bias in exchange for lower variance and protection against zero-probability transitions.

\subsection{Hard constraints and feasible kernels}
A transition grammar can be represented by a directed graph $G_R=(\A_R,E_R)$ in the first-order case. An edge $(a,b)$ is present when the transition is permitted. For an order-$k$ grammar, the state can instead be the $k$-tuple context, producing a graph on $\A_R^k$.

Let $\mathcal P_R$ denote the set of row-stochastic kernels whose support respects the allowed transitions. For each context $c$, define $A(c)\subseteq\A_R$ as the allowed successors. Then
\[
P_{c,a}=0\quad\text{for }a\notin A(c),\qquad \sum_{a\in A(c)}P_{c,a}=1.
\]
If $A(c)=\varnothing$, the grammar is locally inconsistent and no stochastic kernel exists for that context.

A particularly clean projection onto support constraints is obtained by renormalization. Given a strictly positive reference row $q$ and nonempty $A(c)$,
\begin{equation}
P^*_{c,a}=\begin{cases}
q_a/\sum_{b\in A(c)}q_b,&a\in A(c),\\0,&a\notin A(c).
\end{cases}
\end{equation}

\begin{proposition}[KL projection under support constraints]
The row in (3) uniquely minimizes $\KL(P\|q)$ over all probability vectors $P$ supported on $A(c)$.
\end{proposition}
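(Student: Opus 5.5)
The plan is to reduce the constrained problem to Gibbs' inequality by writing $\KL(P\|q)$ for a feasible $P$ as a constant plus the divergence $\KL(P\|P^*)$.

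Fix a context $c$ with $A(c)\neq\varnothing$ and write $Z=\sum_{b\in A(c)}q_b$. Since $q$ is strictly positive, $Z>0$ and $P^*$ in (3) is a well-defined probability vector supported on $A(c)$. Let $P$ be any probability vector with $P_a=0$ for $a\notin A(c)$. We use the convention $0\log 0=0$. With it, only indices in $A(c)$ contribute, and $\KL(P\|q)$ is finite because $q_a>0$ for every $a$.

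The key step is the identity
\[
\KL(P\|q)=\sum_{a\in A(c)}P_a\log\frac{P_a}{q_a}
=\sum_{a\in A(c)}P_a\log\frac{P_a}{q_a/Z}-\log Z
=\KL(P\|P^*)-\log Z .
\]
It follows from $q_a=Z\,P^*_a$ on $A(c)$ and $\sum_{a\in A(c)}P_a=1$. The term $-\log Z$ does not depend on $P$, so minimizing $\KL(P\|q)$ over the feasible set is the same as minimizing $\KL(P\|P^*)$ over that set.

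Next, invoke Gibbs' inequality. Apply Jensen's inequality to the strictly concave function $\log$ on the support of $P$:
\[
-\KL(P\|P^*)=\sum_{a:P_a>0}P_a\log\frac{P^*_a}{P_a}\le\log\sum_{a:P_a>0}P^*_a\le\log 1=0 .
\]
This gives $\KL(P\|P^*)\ge0$. Equality at $P=P^*$ is immediate, so $P^*$ attains the minimum value $-\log Z$.

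Uniqueness is the only step needing care. Equality in the final inequality requires $\sum_{a:P_a>0}P^*_a=1$. Because $P^*_a>0$ for every $a\in A(c)$, this forces $\operatorname{supp}P=A(c)$. Equality in Jensen's inequality, by strict concavity of $\log$, forces the ratio $P^*_a/P_a$ to be constant on the support of $P$. Both vectors sum to one, so that constant is $1$ and $P=P^*$.

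Alternatively, a Lagrange multiplier argument on the simplex restricted to $A(c)$ yields the stationarity condition $\log(P_a/q_a)=\text{const}$, which recovers (3). Strict convexity of $P\mapsto\KL(P\|q)$ then gives uniqueness. This route, however, requires a separate check that the minimizer is not on the boundary, so the Gibbs argument is preferable.
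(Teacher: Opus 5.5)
Your argument is correct, and it takes a different route from the paper. The paper argues by calculus. It forms the Lagrangian for $\sum_{a\in A(c)}P_a=1$ and reads off the stationarity condition $\log(P_a/q_a)+1+\eta=0$, so $P_a\propto q_a$. It then invokes strict convexity of $P\mapsto\KL(P\|q)$ for uniqueness. You instead use the exact decomposition $\KL(P\|q)=\KL(P\|P^*)-\log Z$ together with the equality case of Gibbs' inequality.

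Your route has several advantages. It is calculus-free. It handles feasible $P$ with strictly smaller support inside the same equality analysis, with no separate convexity or boundary argument. It makes the optimal value $-\log\sum_{b\in A(c)}q_b$ explicit. It is also quantitative: the excess divergence of any feasible $P$ over the optimum is exactly $\KL(P\|P^*)$, a Pythagorean-type identity that the paper's proof does not give. The Lagrangian route is shorter, is how one would discover the form of the answer, and extends mechanically to additional linear constraints.

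One correction to your closing remark. A separate boundary check is needed only if stationarity is used as a necessary condition. $\KL(\cdot\|q)$ is convex on the closed simplex over $A(c)$ and differentiable at the relative-interior point $P^*$. There its gradient has the constant value $1-\log Z$ on $A(c)$. The gradient inequality then gives $\KL(P\|q)\ge\KL(P^*\|q)+\langle\nabla\KL(P^*\|q),P-P^*\rangle=\KL(P^*\|q)$ for every feasible $P$. So the stationary point is a global minimizer with no boundary analysis, and this sufficiency reading is what makes the paper's brief argument complete.
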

\begin{proof}
For $P$ supported on $A$, $\KL(P\|q)=\sum_{a\in A}P_a\log P_a/q_a$. The Lagrangian with multiplier $\eta$ for $\sum P_a=1$ has derivative $\log(P_a/q_a)+1+\eta=0$, hence $P_a=Cq_a$. Normalization gives the stated constant. Strict convexity of KL in $P$ gives uniqueness.
\end{proof}

This result is intentionally narrower than the personal claim that all raga constraints reduce to isotonic regression. Inequality constraints involving aggregate frequencies, phrase counts, or ascent/descent bias need not have that form. The computational problem should be solved according to the actual constraint geometry.

\section{Constrained MAP Reconstruction}
\subsection{Objective}
Given the estimated transition kernel $\widehat P$, define the score of a sequence $x\in\A_R^T$ by
\begin{equation}
\ell(x)=\log \pi(x_1,\ldots,x_k)+\sum_{t=k+1}^T\log \widehat P_{x_{t-k:t-1},x_t},
\end{equation}
where $\pi$ is an initial-context distribution. The reconstruction is
\begin{equation}
\widehat X\in\argmax_{x\in\F(O,\R)}\ell(x).
\end{equation}
If several sequences attain the same maximum, the MAP estimator is set-valued unless a deterministic tie-breaking rule is specified. It is important to note that an arbitrary ``the MAP sequence'' does not exist when ties occur.

The observation constraint can be represented by an allowed-symbol set $D_t$: $D_t=\{O_t\}$ if $t$ is observed and $D_t=\A_R$ if missing. The grammar similarly gives a set of allowed successors. Reconstruction is then a constrained path problem on a finite directed acyclic graph obtained by unrolling the context graph across time.

\subsection{Dynamic programming state}
For each time $t\ge k$, define $V_t(c)$ as the maximum score of any admissible prefix ending with context $c=(x_{t-k+1},\ldots,x_t)$. Store a backpointer $B_t(c)$ identifying the predecessor context that attained the maximum.

For a candidate successor $a\in D_{t+1}$, let $c'=\operatorname{shift}(c,a)$. If $(c,a)$ is permitted, update
\begin{equation}
V_{t+1}(c')=\max\left\{V_{t+1}(c'),\;V_t(c)+\log\widehat P_{c,a}\right\}.
\end{equation}
The initial layer is obtained by enumerating admissible initial contexts and their prior probabilities.

\subsection{Optimality theorem}
\begin{theorem}[Exactness of finite-memory dynamic programming]
Assume (i) the sequence length $T$ is finite, (ii) the transition score depends only on the previous $k$ symbols and the next symbol, and (iii) all hard constraints can be represented by a finite state $s_t$ updated deterministically from $(s_{t-1},x_t)$. Then the dynamic program over the augmented state $(x_{t-k+1:t},s_t)$ returns a globally optimal solution of (5), if one exists.
\end{theorem}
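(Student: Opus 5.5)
The plan is to reduce problem (5) to a longest-path problem on a finite layered directed acyclic graph, and then prove by induction on $t$ that the recursion (6), run over the augmented state, computes the exact optimal prefix values.

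First, I construct the graph. For each $t\in\{k,\ldots,T\}$, layer $t$ consists of augmented states $z_t=(x_{t-k+1:t},s_t)$. Here $x_{t-k+1:t}\in\prod_{j=t-k+1}^{t}D_j$ and $s_t$ ranges over the finite constraint-state space $S$. Layer $k$ contains every admissible initial context $(x_1,\ldots,x_k)$ with $\pi(x_1,\ldots,x_k)>0$. Its constraint state $s_k$ is obtained by applying the deterministic update of assumption (iii) from the initial state $s_0$. I draw an edge from $z_t$ to $z_{t+1}$ exactly when three things hold: $x_{t+1}\in D_{t+1}$, the new context equals $\operatorname{shift}(x_{t-k+1:t},x_{t+1})$, and $s_{t+1}=\delta(s_t,x_{t+1})$ is not a rejecting state. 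The edge weight is $\log\widehat P_{x_{t-k+1:t},x_{t+1}}$. If $\widehat P=0$ I set the weight to $-\infty$ and delete the edge. Finally, I restrict layer $T$ to accepting constraint states, which encodes "$x$ satisfies $\mathcal C_R$".

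The key lemma is a bijection. Sequences $x\in\F(O,\R)$ with $\ell(x)>-\infty$ correspond exactly to source-to-sink paths in this graph, and $\ell(x)$ equals the initial weight $\log\pi$ plus the sum of the edge weights along the path. In one direction, a sequence determines its context sequence uniquely, and by determinism of $\delta$ it also determines its constraint-state sequence uniquely; by (ii) its score decomposes as in (4). Conversely, a path determines $x$ by reading off the last coordinate of each context, and membership in $\F$ follows from the $D_t$ restrictions together with acceptance of the final state. This lemma is the step I expect to be the main obstacle. The difficulty is not depth but care: assumption (iii) must be used to show that admissibility is \emph{exactly} captured by the final state, with no constraint lost or double-counted. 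Determinism of the update must also be used to rule out two distinct paths yielding the same $x$. That matters only for backtracking and tie-handling, not for the value itself.

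Next comes the induction. The claim is that $V_t(z)$ equals the maximum prefix score over all admissible prefixes $x_{1:t}$ whose augmented state is $z$. The base case $t=k$ holds by construction of the initial layer. For the inductive step, every prefix ending at $z'$ at time $t+1$ decomposes uniquely as a prefix ending at some predecessor $z$ followed by a single edge. The prefix score is additive across this decomposition, so the maximum over prefixes equals $\max_z\{V_t(z)+w(z,z')\}$, which is exactly update (6). This is Bellman's principle of optimality, and it holds because the score is separable and the predecessor set is determined by the state.

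To finish, I take $\max_{z\in\text{layer }T}V_T(z)$. Finiteness of $T$, of $\A_R$ and of $S$ makes every maximum one over a finite set, so it is attained. If this maximum is $-\infty$, or the final layer is empty, then $\F(O,\R)$ contains no sequence of positive probability, and the program reports that no solution exists. Otherwise, following the backpointers $B_t$ reconstructs a path. By the bijection lemma this path is some $x\in\F(O,\R)$ attaining the maximum of (5). Under ties, the backpointer rule selects one element of the argmax set, consistent with the set-valued remark made after (5).
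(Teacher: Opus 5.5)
Your proposal is correct and takes essentially the same route as the paper: an induction on $t$ showing that $V_t(z)$ is the best admissible-prefix score for each augmented state $z$, using the unique predecessor decomposition and the additivity of the local score (the Bellman optimal-substructure argument of the appendix). Your explicit layered-graph bijection and accepting-state formalization only make precise what the paper leaves implicit in ``represented by a finite state.'' One convention should be stated: because you delete $-\infty$-weight edges, a nonempty $\F(O,\R)$ in which every sequence has score $-\infty$ is reported as infeasible, which departs from a literal reading of (5), though it never arises under Dirichlet smoothing.
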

\begin{proof}
We prove by induction on $t$ that $V_t(z)$ equals the maximum score among all admissible prefixes ending in augmented state $z$. For the initial layer this holds by construction. Assume it holds at time $t$. Any admissible prefix ending in state $z'$ at time $t+1$ has a unique predecessor state $z$ and final symbol $a$ satisfying the transition and observation constraints. By the induction hypothesis, the best prefix ending at $z$ has score $V_t(z)$. Appending $a$ adds exactly the local score $\log\widehat P_{c,a}$, so the recurrence considers the best possible predecessor for every candidate final state. Taking the maximum therefore yields the best prefix for $z'$. Induction establishes the claim at $T$. The maximum over terminal states is consequently the global maximum over $\F(O,\R)$.
\end{proof}

\begin{corollary}[Complexity]
For a dense order-$k$ model with $N$ symbols and no additional state, the basic dynamic program uses at most $N^k$ states per time step and at most $N$ transitions per state, giving time complexity $O(TN^{k+1})$ and memory $O(N^k)$ when only two consecutive layers are retained.
\end{corollary}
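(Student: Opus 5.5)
The plan is to count the work done by the recurrence (6) layer by layer. With no augmented grammar state, the dynamic-programming state at time $t$ is just the context $c=(x_{t-k+1},\ldots,x_t)\in\A_R^k$, so each layer has at most $|\A_R^k|=N^k$ states. This is the same count of contexts used in the parameter-count proposition. For each state $c$ the update ranges over successors $a\in D_{t+1}\subseteq\A_R$, so at most $N$ candidate transitions are examined. Each candidate costs $O(1)$: forming $c'=\operatorname{shift}(c,a)$ can be done in constant time by encoding contexts as base-$N$ integers, $c'=(cN+a)\bmod N^k$. Looking up $\log\widehat P_{c,a}$, testing membership $a\in A(c)$, adding, comparing and storing a backpointer are all single table operations. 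One layer therefore costs $O(N^k\cdot N)=O(N^{k+1})$.

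Next I will sum over time. There are at most $T-k<T$ transitions between layers, giving $O(TN^{k+1})$. The initial layer enumerates at most $N^k$ contexts with their prior $\pi$, and the final maximization over terminal states costs $O(N^k)$. Both terms are dominated by $O(TN^{k+1})$ whenever $T\ge 1$, so the total is $O(TN^{k+1})$. Sparse observation sets $D_t$ or sparse successor sets $A(c)$ only reduce the inner loop, which is the remark about sparse grammars in the abstract.

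For memory, the recurrence for $V_{t+1}$ reads only $V_t$, so two arrays of length $N^k$ suffice. This gives $O(N^k)$ working memory for the optimal value. I will not count the kernel $\widehat P$ itself, which is input of size $N^{k+1}$, or the $\log\widehat P$ table.

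The one point that needs care is that recovering the argmax sequence, and not just the optimal value, requires either the backpointers $B_t$ for all $t$, costing $O(TN^k)$, or a recomputation or checkpointing scheme. I will therefore state the $O(N^k)$ memory bound explicitly as the cost of computing the optimal score with two retained layers, which is the reading of ``when only two consecutive layers are retained.'' I will note that full backtracking adds $O(TN^k)$ storage, or can be avoided at extra time cost by a Hirschberg-style divide and conquer. This clarification is the only real obstacle; the time bound is routine.
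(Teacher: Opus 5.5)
Your proposal is correct and uses the same counting argument the paper relies on: at most $N^k$ context states per layer, at most $N$ successors per state with $O(1)$ work each, summed over the $T-k$ layer transitions, and two retained value arrays for memory. Your caveat is also right, since the paper's own pseudocode stores a predecessor at every layer for backtracking, so the $O(N^k)$ memory figure covers only computing the optimal score (and excludes the $N^{k+1}$-entry kernel table), while recovering $\widehat X$ itself needs $O(TN^k)$ backpointers or a recomputation scheme.
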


This complexity is exponential in $k$ but linear in sequence length $T$. The formula also clarifies when a beam search is unnecessary: for the finite-state model in this theorem, beam search is an approximation to an exact algorithm. A beam may be useful after adding a neural decoder whose score depends on the entire generated prefix, but it should not be presented as exact in the finite-order setting.

\subsection{Zero probabilities}
If $\widehat P_{c,a}=0$, the log score is $-\infty$. This is harmless if the transition is genuinely forbidden, but dangerous if the zero arose only from finite data. Dirichlet smoothing avoids accidental zeros, while hard grammar constraints should be represented separately. Conflating statistical zeroes and grammatical zeroes can cause the model to treat an unobserved transition as impossible.

\section{Statistical Guarantees}
\subsection{A transparent independent-sample bound}
The cleanest concentration argument is obtained when contexts and next symbols can be treated as independent multinomial observations. This assumption is stronger than the actual sequential setting, so it is stated explicitly rather than hidden.

For a fixed context $c$ with $m_c$ observations, let $p_{c,a}$ be the true conditional probability and let $\widetilde p_{c,a}$ be the empirical frequency. Hoeffding's inequality gives
\[
\Pp(|\widetilde p_{c,a}-p_{c,a}|>\varepsilon)\le2e^{-2m_c\varepsilon^2}.
\]
There are at most $N^{k+1}$ context-symbol pairs. A union bound therefore yields
\begin{equation}
\Pp\left(\max_{c,a}|\widetilde p_{c,a}-p_{c,a}|>\varepsilon\right)
\le2N^{k+1}e^{-2m_{\min}\varepsilon^2},
\end{equation}
where $m_{\min}=\min_c m_c$.

Thus, with probability at least $1-\delta$,
\begin{equation}
\|\widetilde P-P^*\|_\infty
\le\sqrt{\frac{\log(2N^{k+1}/\delta)}{2m_{\min}}}.
\end{equation}
This bound is deliberately conservative. It makes the dependence on the number of contexts explicit and does not pretend that a single effective sample size describes every context equally well.

For genuinely dependent Markov data, an analogous statement requires mixing assumptions or an effective sample size. The exact constant then depends on the dependence structure. Rather than quote an unsupported universal constant, we state the result in terms of an effective sample size $m_{\mathrm{eff}}$ when such a bound has been established for the corpus.

\subsection{From parameter error to sequence-score error}
Suppose all true and estimated transition probabilities are bounded below by $p_{\min}>0$ on the transitions under consideration, and
\[
\|\widehat P-P^*\|_\infty\le\varepsilon<p_{\min}.
\]
For any allowed transition, the mean value theorem applied to $\log x$ gives
\begin{equation}
|\log\widehat P_{c,a}-\log P^*_{c,a}|
\le \frac{\varepsilon}{p_{\min}-\varepsilon}.
\end{equation}
Therefore, for any length-$T$ sequence $x$ whose transitions all lie in the supported set,
\begin{equation}
|\ell_{\widehat P}(x)-\ell_{P^*}(x)|
\le (T-k)\frac{\varepsilon}{p_{\min}-\varepsilon}+|\log\widehat\pi(x_{1:k})-\log\pi^*(x_{1:k})|.
\end{equation}

Equation (10) is useful because reconstruction is an optimization problem, not simply a parameter-estimation problem. Even a small per-transition probability error can accumulate over long sequences. It also explains why zero-probability estimates are problematic: the logarithm has no finite Lipschitz constant near zero.

\begin{proposition}[Stability of the MAP score gap]
Let $x^*$ be the unique true-model maximizer and let the score gap between $x^*$ and the second-best feasible sequence be $\Delta>0$. If every feasible sequence's estimated score differs from its true score by at most $\Delta/2$, then the estimated MAP sequence remains $x^*$.
\end{proposition}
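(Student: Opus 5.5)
The plan is a direct comparison argument over the finite feasible set $\F(O,\R)$. Write $\ell^*=\ell_{P^*}$ for the true score and $\widehat\ell=\ell_{\widehat P}$ for the estimated score, as in (10). The hypotheses give three facts. First, $\ell^*(x^*)\ge \ell^*(y)+\Delta$ for every feasible $y\neq x^*$: the maximizer is unique and $\Delta$ is the gap to the second-best sequence. Second, $\F(O,\R)$ is finite, since it is a subset of $\A_R^T$. Third, $|\widehat\ell(x)-\ell^*(x)|\le\Delta/2$ for every feasible $x$. I would first note that finiteness, together with the existence of a unique maximizer, means $\Delta$ is a well-defined positive minimum over the finitely many competitors.

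The key step is to chain the three inequalities for an arbitrary feasible $y\neq x^*$:
\[
\widehat\ell(x^*)\ \ge\ \ell^*(x^*)-\tfrac{\Delta}{2}\ \ge\ \ell^*(y)+\Delta-\tfrac{\Delta}{2}\ =\ \ell^*(y)+\tfrac{\Delta}{2}\ \ge\ \widehat\ell(y).
\]
This gives $\widehat\ell(x^*)\ge\widehat\ell(y)$, so $x^*$ belongs to $\argmax_{x\in\F(O,\R)}\widehat\ell(x)$ in (5).

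The main obstacle is the boundary case, where all three inequalities hold with equality. Then $\widehat\ell(y)=\widehat\ell(x^*)$, which is exactly the tie situation the paper warns about after (5), and the estimated MAP set would be $\{x^*,y\}$ rather than $\{x^*\}$. I would handle this in one of two ways. The preferred option is to show strict inequality under a strictly-less-than-$\Delta/2$ perturbation, which yields uniqueness of the estimated maximizer. Otherwise I would state that with the non-strict bound $x^*$ is always \emph{a} maximizer, and is the unique one unless equality holds throughout, in which case any tie-breaking rule that prefers $x^*$ recovers it. I would also point out that a sufficient condition for the hypothesis follows from (10): it suffices that $(T-k)\varepsilon/(p_{\min}-\varepsilon)$ plus the initial-context discrepancy is at most $\Delta/2$ uniformly over feasible sequences. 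Combined with the concentration bound (8), this gives a sample-size condition under which the decoder of the exactness theorem returns the true-model MAP sequence.

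If $-\infty$ scores are possible, the chain still holds: infeasible-under-$\widehat P$ sequences cannot beat $x^*$, provided $\widehat\ell(x^*)$ is finite. Finiteness of $\widehat\ell(x^*)$ is guaranteed by the uniform $\Delta/2$ closeness assumption, since $\ell^*(x^*)$ is finite.
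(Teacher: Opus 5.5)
Your proof is correct and is essentially the paper's argument. You use the same chain $\widehat\ell(x^*)\ge\ell^*(x^*)-\Delta/2\ge\ell^*(y)+\Delta/2\ge\widehat\ell(y)$, which under the non-strict hypothesis gives only $\widehat\ell(x^*)-\widehat\ell(y)\ge 0$, and like the paper (see its appendix version) you get uniqueness by strengthening the error control to $\sup_x|\widehat\ell(x)-\ell^*(x)|<\Delta/2$.
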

\begin{proof}
Let $y\neq x^*$ be any feasible competitor. The true score satisfies $\ell^*(x^*)-\ell^*(y)\ge\Delta$. Each estimated score differs by at most $\Delta/2$, so
$\ell^{\wedge}(x^*)-\ell^{\wedge}(y)\ge\Delta-\Delta/2-\Delta/2=0$. With strict gap and strict error control, the inequality is strict; hence $x^*$ remains optimal.
\end{proof}

The proposition highlights an important limitation: accurate parameter estimation does not guarantee exact sequence recovery if many candidate sequences have nearly equal probability.

\subsection{BIC and order selection}
For an order-$k$ model with $d_k=N^k(N-1)$ free parameters, a standard BIC criterion is
\begin{equation}
\operatorname{BIC}(k)=-2\ell_k(\widehat\theta_k)+d_k\log L,
\end{equation}
where $L$ is the number of relevant transitions and $\ell_k$ is the maximized log-likelihood. The BIC penalty is derived from parameter dimension; it should not be replaced by $N^k(N-1)$ without the logarithmic sample-size factor.

In dependent sequence data, the exact regularity conditions for BIC consistency require care. The paper therefore makes a limited claim: BIC provides a principled complexity penalty, while asymptotic consistency requires the usual identifiability, regularity, and dependence assumptions appropriate to the chosen model family. We do not claim that the synthetic experiment proves a universal optimal order for raga music.

For $N=7$, the parameter growth is dramatic:
\begin{center}
\begin{tabular}{c|r|r}
\toprule
Order $k$ & Contexts $7^k$ & Free parameters $7^k(6)$\\
\midrule
1&7&42\\2&49&294\\3&343&2058\\4&2401&14406\\5&16807&100842\\
\bottomrule
\end{tabular}
\end{center}

This table makes the order-selection problem concrete. If a corpus contains only a few thousand transitions, a fourth- or fifth-order model may have many contexts with little or no evidence. Smoothing can prevent numerical failure, but it cannot manufacture information that is not in the data.

\subsection{Identifiability}
A reconstruction is identifiable only when the available observation and model sufficiently distinguish the candidate completions. If two sequences have identical observed positions and identical model scores, the posterior cannot select between them without additional information. This is an information-theoretic limitation of the problem specification.

\section{Optional Extensions Beyond the Core Model}
\subsection{Raga-HMM}
A hidden Markov model can represent latent phrase context. Let $Q=\{1,\ldots,H\}$ denote latent states, $A$ a state-transition matrix, and $b_j(x)$ an emission distribution. The joint probability is
\[
\Pp(q_1,x_1,\ldots,q_T,x_T)=\pi_{q_1}b_{q_1}(x_1)\prod_{t=2}^T A_{q_{t-1}q_t}b_{q_t}(x_t).
\]
Forward--backward inference can estimate posterior state probabilities and Viterbi decoding can obtain the most likely state path. These are standard exact algorithms for the HMM itself.

The important qualification is interpretability. Calling states ``pakad,'' ``vistar,'' ``avaroha,'' or ``mukhyanga'' requires annotation or an explicit mapping from learned states to those concepts. The mathematics does not create the musicological interpretation automatically.

\subsection{Hierarchical VAE}
A hierarchical VAE can introduce global and local latent variables, for example
\[
z_g\sim\mathcal N(0,I),\qquad z_l\sim q_\phi(z_l\mid z_g,x),
\]
with an ELBO
\[
\mathcal L=\E_{q_\phi(z\mid x)}[\log p_\theta(x\mid z)]-\KL(q_\phi(z_g\mid x)\|p(z_g))-\E_{q_\phi(z_g\mid x)}\KL(q_\phi(z_l\mid x,z_g)\|p(z_l\mid z_g)).
\]
This is useful for learning long-range structure, but an ELBO does not impose hard raga constraints. If grammatical validity is required, the decoder must be constrained or its output projected onto the feasible set.

\subsection{Continuous pitch and gamaka}
The symbolic model intentionally omits continuous pitch. A more faithful representation would treat a performance segment as a curve $\gamma:[0,1]\to\mathbb R$ in cents relative to a reference tonic, together with timing and possibly a discrete ornament label. One can define a trajectory discrepancy
\[
d_\gamma(\gamma_1,\gamma_2)=\inf_{\varphi\in\Phi}\left(\int_0^1|\gamma_1(t)-\gamma_2(\varphi(t))|^2dt\right)^{1/2},
\]
where $\Phi$ is a suitable class of increasing reparameterizations. This resembles an alignment-aware curve distance, but the precise function space, boundary conditions, and invariances must be fixed before metric claims are made.

A continuous model also needs an observation likelihood. Audio is not pitch. The signal passes through recording noise, source separation, windowing, pitch estimation, and segmentation. A principled model would therefore include latent pitch trajectories and an explicit acoustic likelihood rather than treating extracted pitch as ground truth.

\subsection{Bayesian fusion}
If several conditionally meaningful models are available, they can be combined through a product-of-experts or hierarchical Bayesian model. For example,
\[
\Pp(x\mid O)\propto \Pp(O\mid x)\,\Pp_{\mathrm{Markov}}(x)\,\Pp_{\mathrm{phrase}}(x)\,\Pp_{\mathrm{neural}}(x).
\]
However, multiplying distributions that were each trained on the same data can double-count evidence. A correct Bayesian fusion therefore requires a joint generative interpretation or a deliberate product-of-experts assumption. The phrase ``principled Bayesian fusion'' should not be used merely because several log scores are added.

\section{Synthetic Experimental Design}
\subsection{Purpose}
The experiment tests whether the mathematical reconstruction machinery behaves as expected under controlled conditions. It does not test whether ARS can reconstruct a historical recording. This distinction is essential because synthetic sequences are generated from the same kind of structure the estimator is designed to learn.

We use six labels inspired by Hindustani raga traditions: Yaman, Bhairav, Todi, Marwa, Purvi, and Bhimpalasi. The labels identify the musical context inspiring each symbolic alphabet; the synthetic generator is not claimed to be a complete computational grammar of those ragas. In particular, a seven-note inventory is not treated as sufficient to define raga identity.

For each condition, training sequences are generated from a fixed transition system. The proposed experimental design uses 1,200 training sequences and 60 held-out test sequences, each of length 32. Missingness rates are $p\in\{0.10,0.30,0.50\}$. Orders $k=1,2,3$ are compared. The Dirichlet concentration is $\alpha=0.5$. The master seed is 20260731, with deterministic child seeds for each raga/order/missingness condition.

The missing positions are sampled uniformly without replacement. This is a deliberately simple missing-at-random mechanism. The use of explicitly reported configurations, fixed seeds, and transparent evaluation protocols follows reproducibility guidance for machine-learning research \cite{pineau2020}. Real damaged audio may instead produce contiguous gaps, correlated loss, or transcription errors concentrated near rapid ornamentation. Such mechanisms should be added in future experiments rather than silently assumed away.

\subsection{Evaluation metric}
The primary metric is missing-symbol accuracy
\begin{equation}
\operatorname{Acc}_{\mathrm{miss}}=\frac1{|M|}\sum_{t\in M}\mathbf1\{\widehat X_t=X_t\}.
\end{equation}
It has an immediate interpretation and avoids arbitrary weighting between pitch, energy, and ornament categories. We additionally report standard deviation across held-out sequences and, where possible, confidence intervals obtained from independent test sequences.

\subsection{Baselines and ablations}
The minimum baseline is a first-order Markov model. The main comparisons are therefore $k=1,2,3$. A uniform baseline can also be included to establish the difficulty of the task: if the alphabet has $N$ equally likely symbols, its expected accuracy is $1/N$ under independent guessing.

Ablation should remove one mathematical mechanism at a time. Useful comparisons include: no smoothing versus Dirichlet smoothing; unconstrained versus grammar-constrained decoding; and lower versus higher Markov order. Neural and HMM modules should not be included in an ablation table unless they are actually trained and evaluated under the same split. A claimed ``VAE improvement'' cannot be inferred from a conceptual architecture diagram.

\subsection{Statistical reporting}
Because test sequences are finite, reporting only a mean can be misleading. For each condition, we report the mean, standard deviation, number of test sequences, and a confidence interval when the assumptions behind the interval are stated. Paired comparisons should use the same masked test sequences across models so that differences are not dominated by different random missingness patterns.

We should not select the best condition after seeing the test results and then describe it as a pre-specified hypothesis. If order selection uses BIC, BIC is computed on training data and the selected order is frozen before evaluating the test set.

\subsection{Reproducibility protocol}
Every result should be reproducible from a clean environment. The repository should contain the generator, estimator, decoder, experiment configuration, fixed seeds, requirements, and a script that writes the final tables. The README should specify the exact command used to regenerate the results. 

\section{Synthetic Data Evaluation}
\label{sec:synthetic}

This section reports the controlled synthetic experiment separately from the real-world evaluation. The separation is deliberate: synthetic data provide exact symbolic ground truth and permit controlled tests of the estimator and decoder, but they do not constitute evidence about genuine musical performances. The synthetic experiment is therefore treated as a preliminary validation of the implementation.

\subsection{Purpose and Experimental Design}
\label{ssec:synth-purpose}
The experiment tests whether the mathematical reconstruction machinery behaves as expected when the underlying symbolic process is known. It uses six raga-inspired symbolic alphabets: Yaman, Bhairav, Todi, Marwa, Purvi, and Bhimpalasi. These labels identify the musical context inspiring each symbolic alphabet; the generator is not claimed to be a complete computational grammar of those ragas.

For each condition, training sequences are generated from a fixed transition system. The design uses 1,200 training sequences and 60 held-out test sequences, each of length 32. Missingness rates are $10\%$, $30\%$, and $50\%$, and Markov orders $k=1,2,3$ are compared. The Dirichlet concentration is $\alpha=0.5$, and the master seed is 20260731.

The missing positions are sampled uniformly without replacement. This missing-at-random mechanism is intentionally simple. Real damaged audio can instead produce contiguous gaps, correlated errors, or transcription failures concentrated around rapid ornamentation. Those effects are evaluated separately through the real-audio experiment.

\subsection{Evaluation Metric}
\label{ssec:synth-metric}
The primary metric is missing-symbol accuracy, given in equation (13). For a seven-symbol alphabet, a uniform predictor has expected accuracy $1/7\approx0.143$.

\subsection{Preliminary Synthetic Results}
\label{ssec:synth-results}
The synthetic Yaman-inspired pilot is summarized in Table~\ref{tab:synth}. These results are explicitly preliminary: they demonstrate that the implementation can recover masked symbols under a controlled generator, rather than demonstrating recovery of real Yaman performances.

\begin{table}[h]
\centering
\caption{Preliminary synthetic Yaman-inspired results. Means and standard deviations are computed over the stated held-out synthetic evaluations.}
\label{tab:synth}
\vspace{2mm}
\begin{tabular}{cccc}
\toprule
Order $k$ & Missingness & Mean accuracy & SD\\
\midrule
1&10\%&0.439&0.316\\
1&30\%&0.452&0.177\\
1&50\%&0.414&0.147\\
2&10\%&0.539&0.319\\
2&30\%&0.433&0.157\\
2&50\%&0.384&0.129\\
3&10\%&0.450&0.318\\
3&30\%&0.390&0.171\\
3&50\%&0.358&0.147\\
\bottomrule
\end{tabular}
\end{table}

The highest observed synthetic accuracy is $0.539$ for order two at $10\%$ missingness. At $50\%$ missingness, the corresponding order-two accuracy is $0.384$. All nine reported conditions exceed the uniform seven-symbol baseline.

\begin{figure}[h]
\centering
\includegraphics[width=0.8\linewidth]{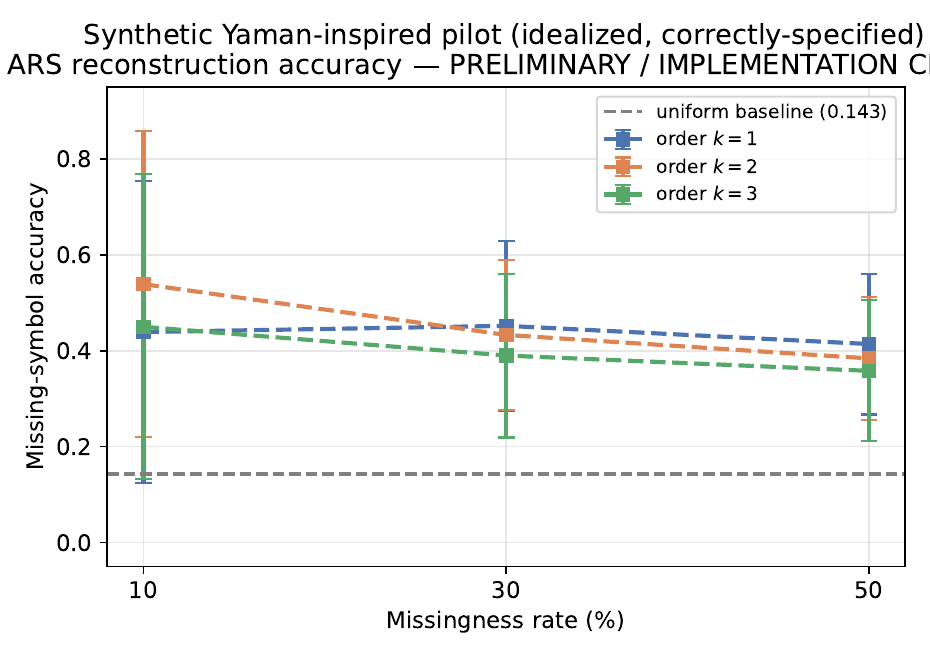}
\caption{Preliminary synthetic evaluation. Missing-symbol reconstruction accuracy is shown for Markov orders $k=1,2,3$ at three masking rates. The horizontal reference is the uniform seven-symbol baseline.}
\label{fig:synthetic-results}
\end{figure}

The synthetic results illustrate the expected bias--variance tradeoff. Increasing Markov order provides additional context, but it also increases the number of context-specific transition probabilities. Thus, higher order need not improve accuracy when the available synthetic sample is insufficient for reliable estimation of every context.

\subsection{Synthetic Missingness and Model Order}
\label{ssec:synth-order}
At $10\%$ missingness, the three orders obtain $0.439$, $0.539$, and $0.450$. At $30\%$, they obtain $0.452$, $0.433$, and $0.390$. At $50\%$, they obtain $0.414$, $0.384$, and $0.358$. The absence of monotonic improvement with order is therefore already visible in the controlled experiment.

\begin{figure}[h]
\centering
\includegraphics[width=0.75\linewidth]{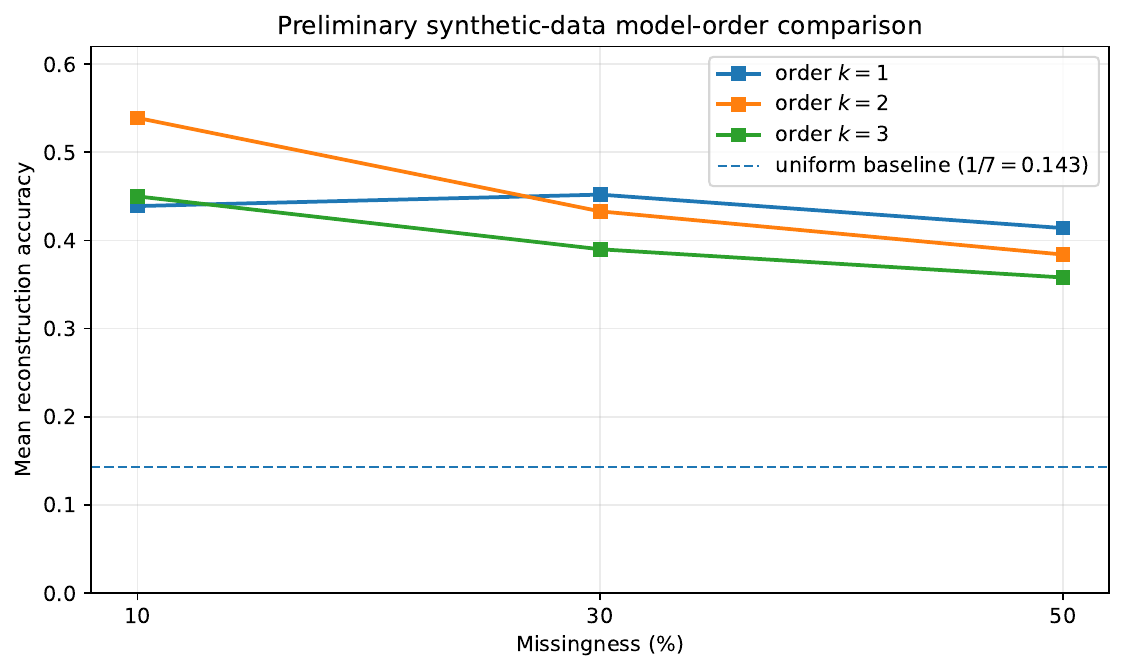}
\caption{Synthetic reconstruction accuracy grouped by Markov order. Higher order increases contextual capacity but does not guarantee higher reconstruction accuracy.}
\label{fig:synthetic-order}
\end{figure}

\subsection{Interpretation and Boundary of the Synthetic Evidence}
\label{ssec:synth-boundary}
The synthetic experiment establishes that the Dirichlet estimator and exact dynamic-programming decoder can exploit known transition structure and substantially outperform uninformed guessing. It does not establish a universal optimal Markov order, a complete Yaman grammar, or historical authenticity. It is because the generator is controlled and symbolic ground truth is exact, so the synthetic experiment should remain a preliminary validation rather than the main empirical claim of the paper.

\newpage
\section{Real-World Data Evaluation}
\label{sec:real}

The real-world evaluation is the principal empirical component of the paper. Unlike the synthetic experiment, these sequences originate from genuine recorded Yaman performances and therefore include uncertainty introduced by acoustic observation, tonic estimation, segmentation, quantization, and natural performance variation.

\subsection{Why the Real-World Evaluation Is the Principal Result}
\label{ssec:real-principal}
The central empirical question is whether the statistical reconstruction machinery remains useful after a genuine musical performance has been converted into the symbolic representation required by the model. The real-data experiment therefore carries greater evidentiary weight than the synthetic pilot even though its raw numerical accuracy may be lower.

\subsection{Data Provenance}
\label{ssec:real-provenance}
The corpus consists of 42 short Yaman recordings from the public RagaVeda repository \cite{ragaveda2026}, totalling approximately 44.6 minutes. The Yaman context is described with reference to the Darbar Raga Explorer \cite{darbar2026}. The source is a third-party machine-learning repository rather than a curated musicological archive. It does not provide complete documentation of recording dates, performer attribution, tonic annotations, expert symbolic transcriptions, or licensing. Accordingly, the experiment is presented as a feasibility pilot rather than archival reconstruction.

\subsection{Audio-to-Symbol Pipeline}
\label{ssec:real-pipeline}
Fundamental frequency was extracted using pYIN through \texttt{librosa}, with $f_{\min}=\mathrm{C2}$, $f_{\max}=\mathrm{C6}$, a 256-sample hop, and a sampling rate of 22,050 Hz. Tonic was estimated independently for each recording from the mode of its octave-folded voiced pitch-class histogram, followed by heuristic octave correction. Recordings below the predefined tonic-confidence threshold of $0.08$ were excluded.

The remaining pitch contours were median-filtered and segmented using a 100-cent change-point threshold with a minimum segment length of four frames (approximately 46 ms). Segment medians, expressed in cents relative to the estimated tonic, were quantized to the seven Yaman symbols $S,R,G,m,P,D,N$ at $0,200,400,600,700,900,1100$ cents, respectively, and consecutive repeated symbols were collapsed.

This representation is not an expert transcription. It is an automatically generated symbolic representation of the recordings. Consequently, pitch-tracking, tonic-estimation, segmentation, and quantization errors can propagate into the measured reconstruction accuracy.

\subsection{Corpus Construction}
\label{ssec:real-corpus}
Ten recordings were excluded because their tonic-confidence scores were below $0.08$, and two additional files were excluded because they yielded fewer than ten symbolic events. This produced 32 usable sequences; after requiring at least 15 symbols for held-out reconstruction, 30 sequences remained.

\begin{table}[h]
\centering
\caption{Construction of the real-world Yaman evaluation corpus.}
\label{tab:real-corpus}
\vspace{2mm}
\begin{tabular}{lr}
\toprule
Stage & Count\\
\midrule
Source recordings & 42\\
Total audio duration & $\approx44.6$ min\\
Tonic-confidence exclusions & 10\\
Short-sequence exclusions & 2\\
Usable symbolic sequences & 32\\
Final sequences ($\geq15$ symbols) & 30\\
Training sequences & 24\\
Held-out test sequences & 6\\
Training transitions & $\approx1,700$\\
\bottomrule
\end{tabular}
\end{table}

\begin{figure}[h]
\centering
\includegraphics[width=0.75\linewidth]{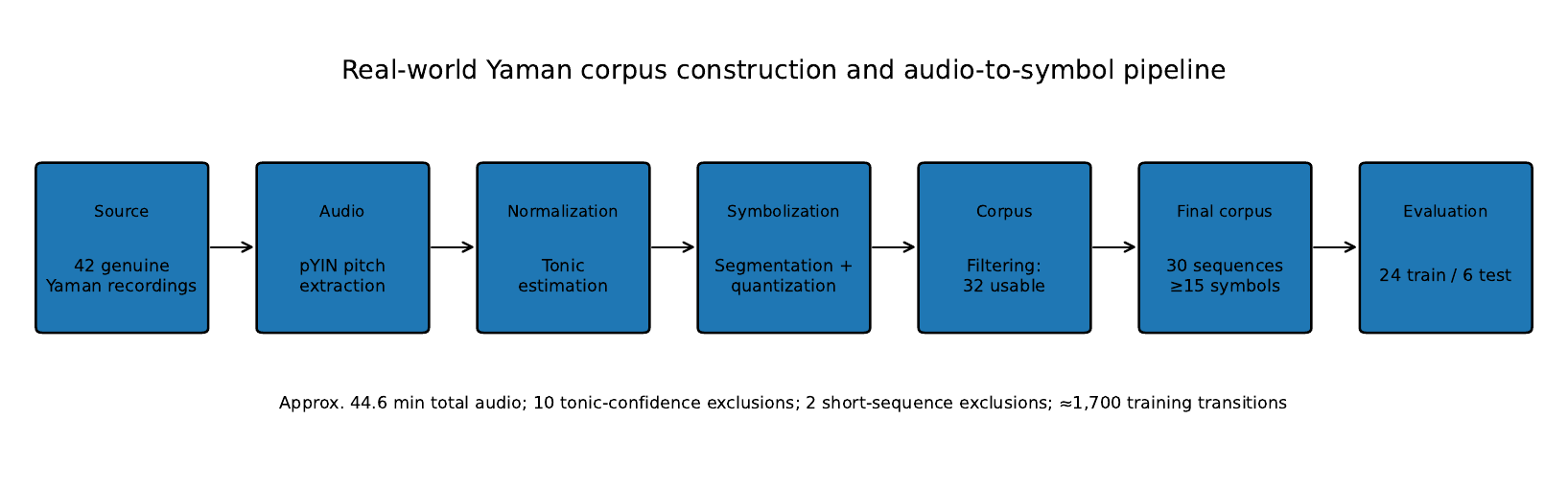}
\caption{Real-world corpus construction from genuine recordings through pitch extraction, tonic estimation, segmentation, symbolic quantization, filtering, and the final training/test split.}
\label{fig:real-corpus}
\end{figure}

\subsection{Experimental Protocol}
\label{ssec:real-protocol}
The 30 sequences were split at the sequence level into 24 training sequences and 6 held-out test sequences using master seed 20260731. For each test sequence, missing positions were sampled uniformly without replacement at $10\%$, $30\%$, and $50\%$ missingness. Twenty independent masks were generated for each sequence and missingness level, giving 120 evaluations per order--missingness condition.

The same Dirichlet concentration $\alpha=0.5$ was used for all three orders $k\in\{1,2,3\}$. Reconstruction used the exact dynamic-programming decoder with observation-consistency constraints. No hard Yaman grammar constraints were imposed in this first real-data pass, isolating the statistical reconstruction component.

As an implementation check, the dynamic-programming decoder was compared with exhaustive enumeration on 30 small synthetic cases spanning random sequences, orders $k=1,2$, and random masks. The optimal scores agreed in all 30 cases.

\subsection{Primary Real-World Results}
\label{ssec:real-results}
Table~\ref{tab:real-yaman} reports the complete real-data results. All nine tested conditions exceed the uniform seven-symbol baseline. Order two is the strongest model at every missingness level, achieving $0.470$, $0.414$, and $0.371$ at $10\%$, $30\%$, and $50\%$ missingness, respectively.

\begin{table}[h]
\centering
\caption{Primary real-world Yaman reconstruction results. Means and standard deviations are computed over 120 masked test evaluations per condition.}
\label{tab:real-yaman}
\vspace{2mm}
\begin{tabular}{cccccc}
\toprule
Order $k$ & Missingness & Mean accuracy & SD & Evaluations & Contexts seen/possible\\
\midrule
1&10\%&0.293&0.237&120&7/7\\
1&30\%&0.287&0.172&120&7/7\\
1&50\%&0.237&0.132&120&7/7\\
2&10\%&0.470&0.253&120&42/49\\
2&30\%&0.414&0.163&120&42/49\\
2&50\%&0.371&0.125&120&42/49\\
3&10\%&0.445&0.289&120&215/343\\
3&30\%&0.394&0.186&120&215/343\\
3&50\%&0.363&0.148&120&215/343\\
\bottomrule
\end{tabular}
\end{table}

\begin{figure}[h]
\centering
\includegraphics[width=0.8\linewidth]{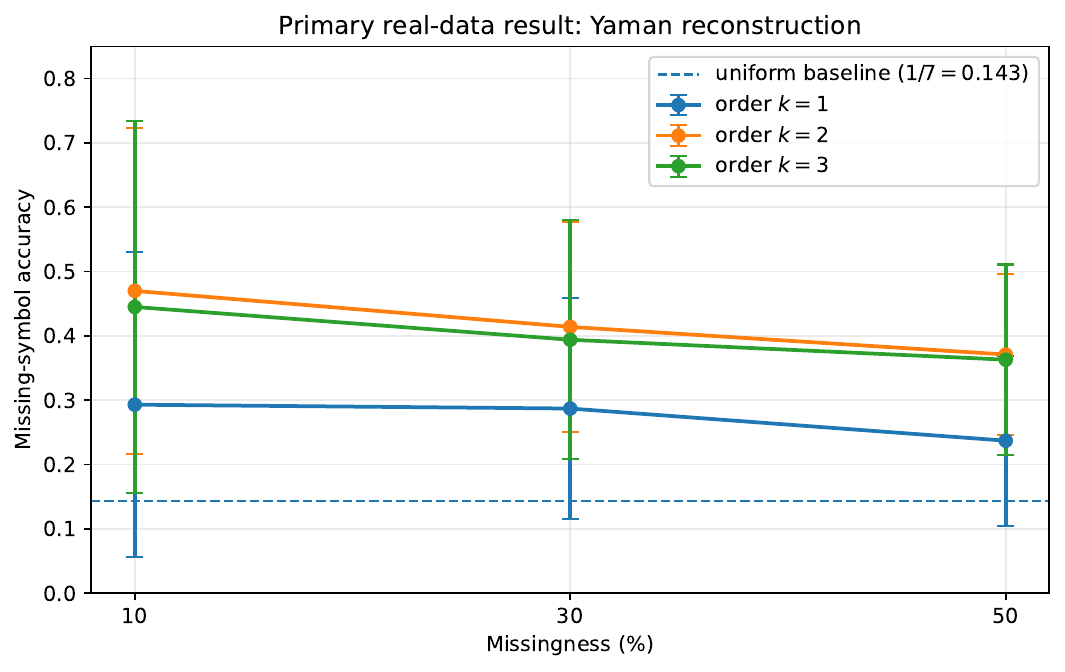}
\caption{\textbf{Primary real-data result.} Missing-symbol reconstruction accuracy on genuine Yaman audio-derived sequences. Points show means and error bars show one standard deviation across repeated masks. The dashed reference is the uniform seven-symbol baseline.}
\label{fig:real-yaman}
\end{figure}

\subsection{Real-Data Performance as Missingness Increases}
\label{ssec:real-missingness}
For order two, accuracy decreases from $0.470$ at $10\%$ missingness to $0.414$ at $30\%$ and $0.371$ at $50\%$. The absolute decrease from $10\%$ to $50\%$ is therefore $0.099$.

\begin{figure}[h]
\centering
\includegraphics[width=0.75\linewidth]{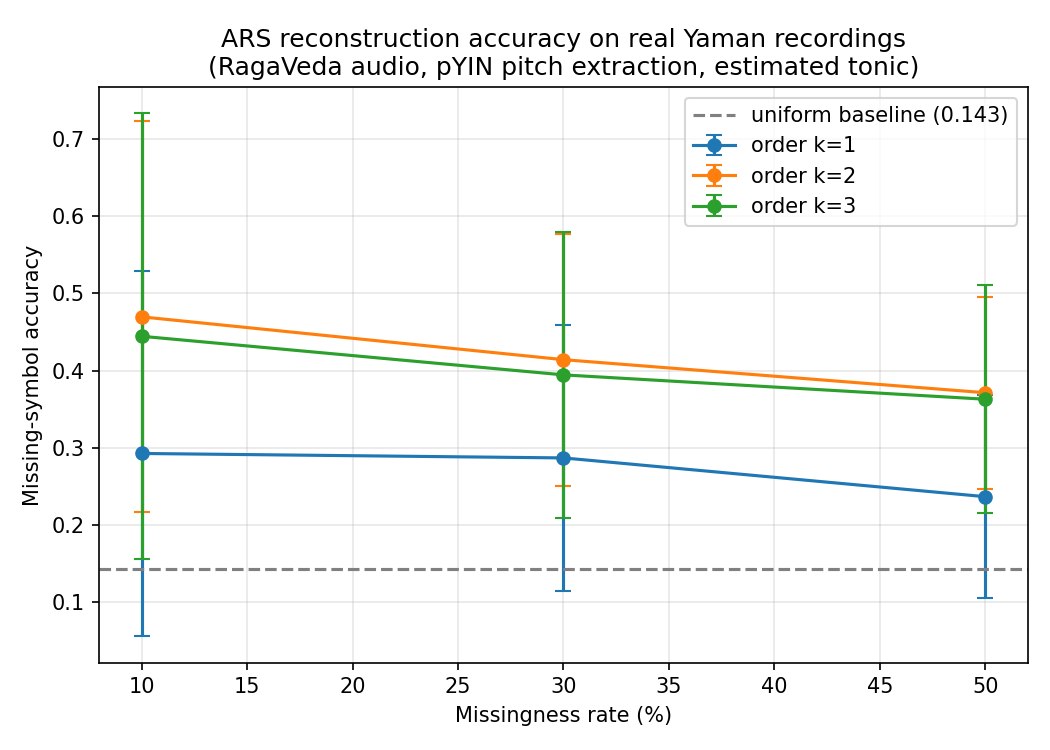}
\caption{Real Yaman reconstruction accuracy as missingness increases. The second-order model remains above the uniform baseline even when half of the symbolic positions are masked.}
\label{fig:real-missingness}
\end{figure}

\subsection{Real Data Relative to the Uniform Baseline}
\label{ssec:real-baseline}
The uniform seven-symbol baseline is $1/7\approx0.143$. Order two therefore improves over baseline by $0.327$, $0.271$, and $0.228$ at the three missingness levels, corresponding to approximately $3.29\times$, $2.90\times$, and $2.60\times$ the uniform accuracy.

\begin{figure}[h]
\centering
\includegraphics[width=0.75\linewidth]{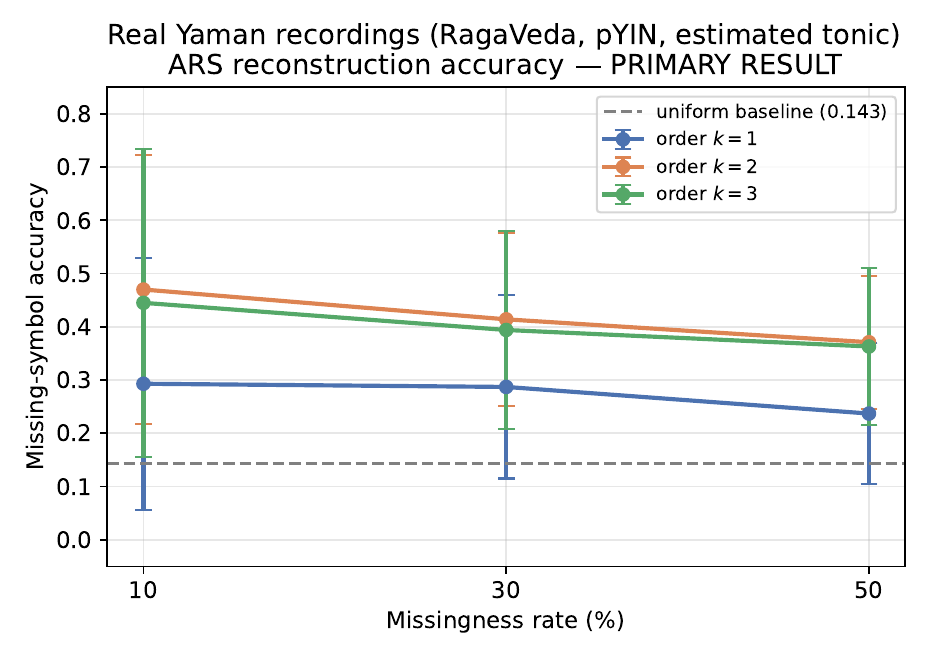}
\caption{Real-data reconstruction accuracy relative to the uniform seven-symbol baseline. Every tested condition remains above chance.}
\label{fig:real-baseline}
\end{figure}

\subsection{Context Coverage and Model Sparsity}
\label{ssec:real-sparsity}
For a seven-symbol alphabet, the number of possible contexts is $7$, $49$, and $343$ for orders one, two, and three. The real training corpus contains approximately 1,700 transitions. The observed context coverage is therefore $7/7$ for order one, $42/49$ for order two, and $215/343$ for order three.

\begin{figure}[h]
\centering
\includegraphics[width=0.75\linewidth]{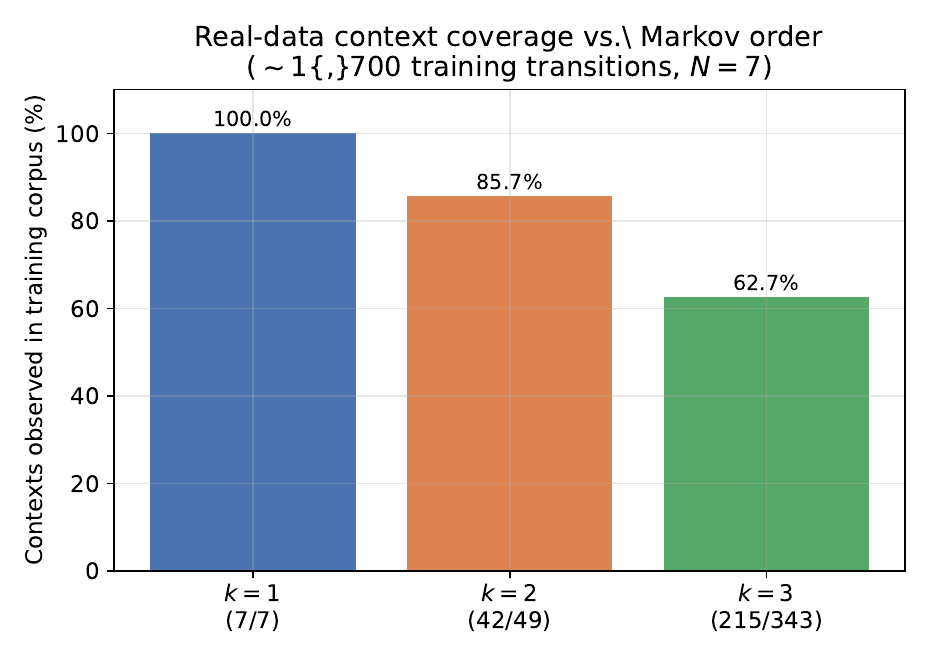}
\caption{Observed context coverage in the real Yaman training corpus. Higher-order models introduce many contexts that are not observed in the available training data.}
\label{fig:real-coverage}
\end{figure}

The free parameter counts are
\[
d_1=42,\qquad d_2=294,\qquad d_3=2058.
\]
Thus order three has 2,058 free parameters despite only approximately 1,700 training transitions.

\begin{figure}[h]
\centering
\includegraphics[width=0.75\linewidth]{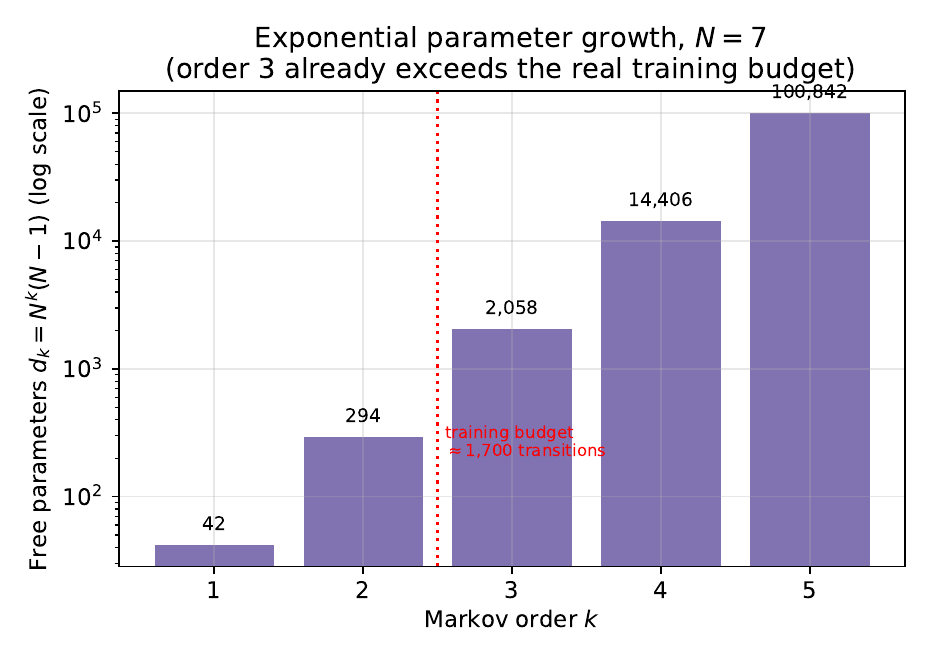}
\caption{Exponential growth in free transition parameters as Markov order increases for a seven-symbol alphabet. The horizontal reference indicates the approximately 1,700 real training transitions.}
\label{fig:parameter-growth}
\end{figure}

This provides a concrete explanation for the observed order-two optimum: order one has less contextual capacity, while order three has considerably greater capacity but substantially weaker evidence per context.

\subsection{Real-Data Model-Order Comparison}
\label{ssec:real-order}
The real results consistently exhibit the ordering
\[
\boxed{k=2>k=3>k=1}
\]
at all three missingness levels. The order-two advantage is not confined to one masking condition.

\begin{figure}[h]
\centering
\includegraphics[width=0.75\linewidth]{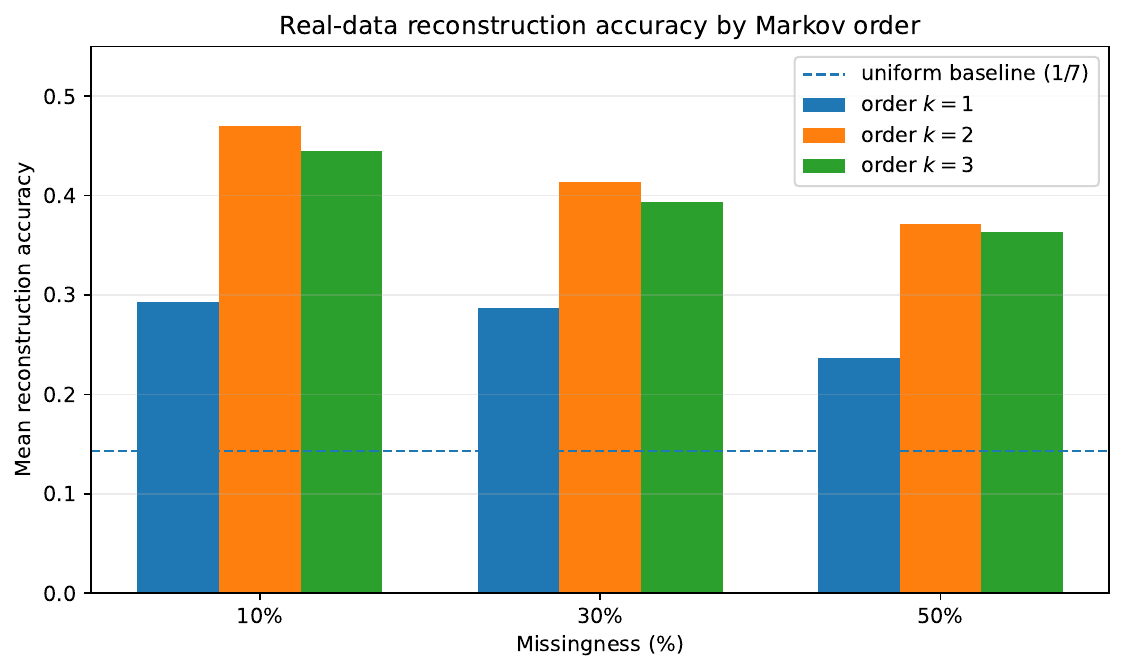}
\caption{Real-data reconstruction accuracy as a function of Markov order. The second-order model is the strongest performer at all three missingness levels.}
\label{fig:real-order}
\end{figure}

\subsection{Variance and Robustness}
\label{ssec:real-variance}
The reported standard deviations quantify variation across repeated masks. For order two, SD decreases from $0.253$ to $0.163$ to $0.125$ as missingness increases. Order three exhibits the largest SD at $10\%$ missingness, consistent with greater sensitivity to whether particular higher-order contexts are represented in training.

\begin{figure}[h]
\centering
\includegraphics[width=0.75\linewidth]{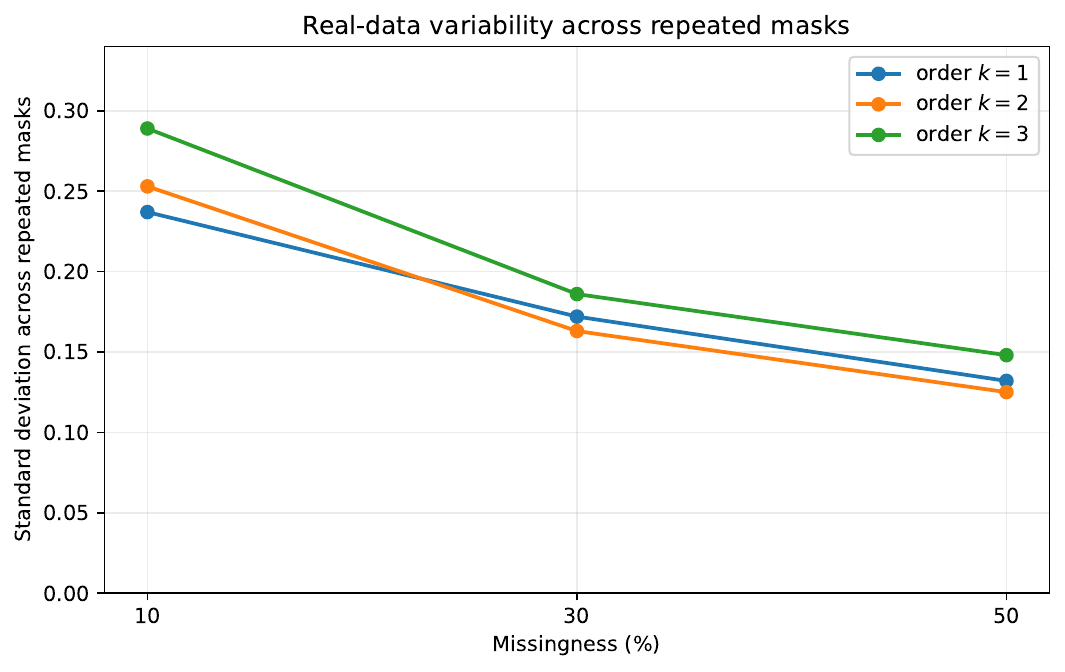}
\caption{Variation in real-data reconstruction accuracy across repeated masks. Error bars represent standard deviation.}
\label{fig:real-variance}
\end{figure}

Because only six sequences are held out, formal significance testing between orders is not reported. The study instead reports effect sizes, repeated-mask variability, and the consistency of the ordering across masking conditions.

\subsection{Threats to Validity}
\label{ssec:threats}
The real experiment has several important limitations. First, the symbolic reference is generated by the same automated pipeline used to process the recordings rather than by independent expert transcription. Second, tonic estimation is automatic. Third, pitch tracking and segmentation errors propagate into the symbolic sequence. Fourth, the corpus is small and restricted to Yaman. Fifth, the source has incomplete provenance and licensing documentation. Sixth, no hard Yaman grammar is imposed. Finally, the held-out set contains only six sequences.

These limitations mean that the reported accuracy should be interpreted as agreement with the pipeline-generated symbolic reference under the stated masking protocol, not as expert-judged musical correctness or historical authenticity.

\subsection{What the Real Experiment Establishes}
\label{ssec:real-claims}
The real-world experiment establishes that the reconstruction framework can be executed end-to-end on genuine recorded performances and that the resulting audio-derived symbolic sequences contain recoverable local transition structure. The strongest observed configuration is order two, which remains substantially above the uniform baseline even at 50\% missingness.

The result is therefore evidence of computational feasibility on genuine audio-derived data, while the limitations above prevent stronger claims about musicological validity.

\newpage
\section{Synthetic vs. Real-World Comparison}
\label{sec:comparison}

This section compares the two experiments only after reporting them independently. The comparison is designed to distinguish raw numerical accuracy from evidentiary relevance. Synthetic data are expected to be easier because their symbolic ground truth is exact and their generating process is controlled. Real recordings are harder, but they test the system under conditions much closer to its intended use.

\subsection{Direct Numerical Comparison}
\label{ssec:comparison-numeric}
For the order-two model, the synthetic pilot obtains $0.539$, $0.433$, and $0.384$ at $10\%$, $30\%$, and $50\%$ missingness. The corresponding real-data values are $0.470$, $0.414$, and $0.371$.

\begin{table}[h]
\centering
\caption{Direct comparison of order-two reconstruction accuracy. The synthetic pilot is numerically higher at all three masking levels.}
\label{tab:comparison}
\vspace{2mm}
\begin{tabular}{cccc}
\toprule
Missingness & Synthetic & Real Yaman & Difference\\
\midrule
10\%&0.539&0.470&-0.069\\
30\%&0.433&0.414&-0.019\\
50\%&0.384&0.371&-0.013\\
\bottomrule
\end{tabular}
\end{table}

\begin{figure}[h]
\centering
\includegraphics[width=0.8\linewidth]{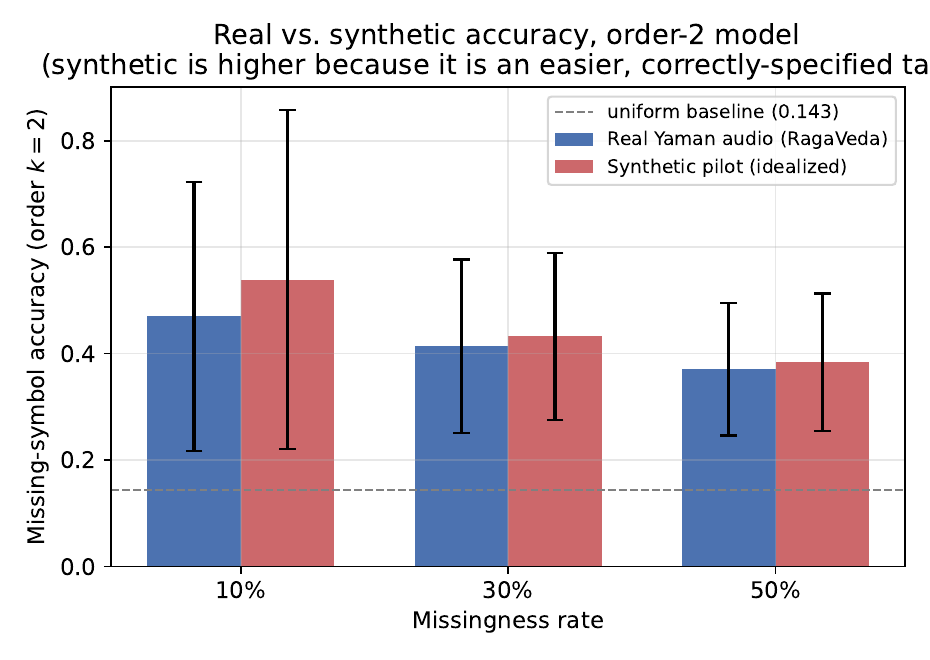}
\caption{Direct graphical comparison of order-two reconstruction accuracy. The synthetic pilot is numerically higher at every missingness level, while the real-data curve remains close to it and substantially above the uniform baseline.}
\label{fig:real-vs-synthetic}
\end{figure}

The difference shrinks from $0.069$ at $10\%$ missingness to $0.013$ at $50\%$. This is an important empirical observation: the real-data system approaches the synthetic pilot's accuracy under the most difficult masking condition despite the additional uncertainty in the real pipeline.

\subsection{Why Synthetic Data Produce Higher Raw Accuracy}
\label{ssec:comparison-why-synthetic}
There are three principal reasons. First, the synthetic generator is controlled and matched to the symbolic modeling assumptions. Second, synthetic symbols are known exactly, so there is no pitch-tracking, tonic-estimation, segmentation, or quantization error. Third, the synthetic corpus lacks natural inter-performance variability.

Thus, higher synthetic accuracy should not be interpreted as evidence that synthetic data are more representative of the intended application. It primarily reflects the fact that the synthetic task is cleaner.

\begin{figure}[h]
\centering
\includegraphics[width=0.75\linewidth]{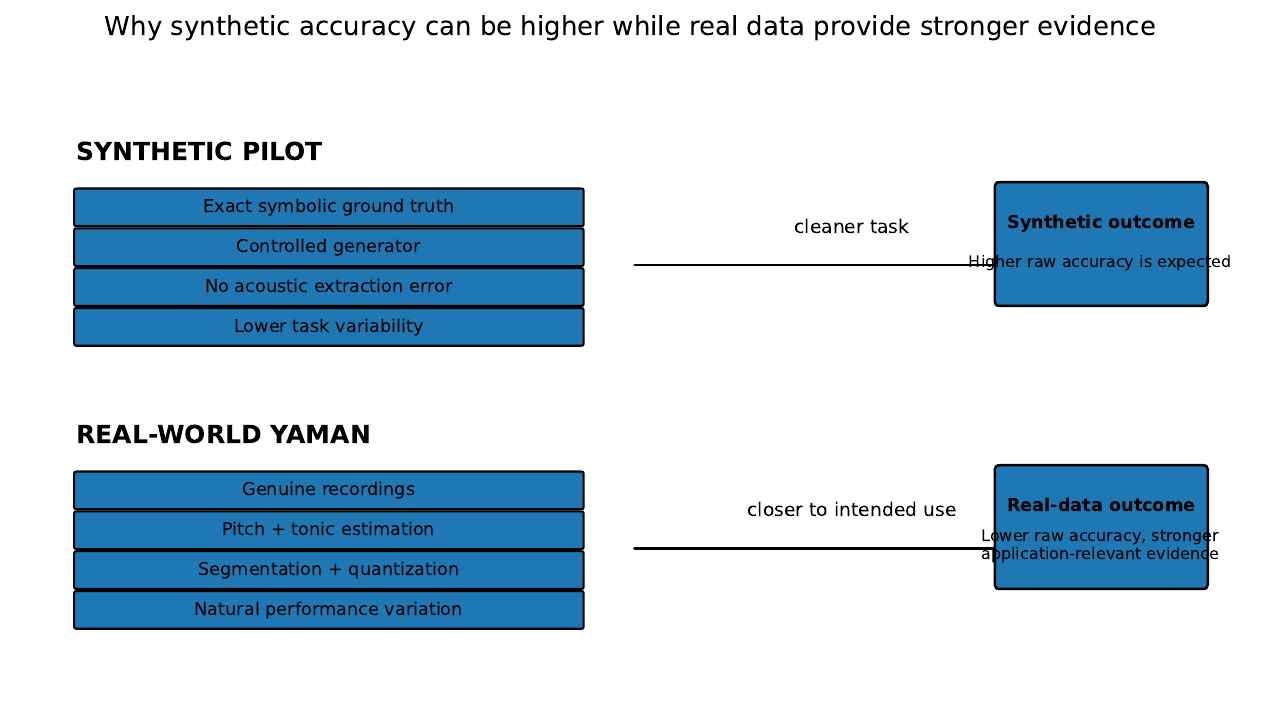}
\caption{Conceptual graphical comparison of the two evaluation regimes. Synthetic sequences begin as exact symbolic observations, whereas the real-data pipeline introduces acoustic and transcription uncertainty before reconstruction.}
\label{fig:synthetic-real-difficulty}
\end{figure}

\subsection{Why the Real-World Evaluation Is the Stronger Empirical Test}
\label{ssec:comparison-stronger}
The fact that synthetic accuracy is numerically higher does not make the synthetic experiment the stronger evidence. The synthetic experiment asks whether the implementation can recover symbols from a controlled process. The real-data experiment asks whether recoverable statistical structure survives after genuine performances have passed through an imperfect acoustic-to-symbolic pipeline.

The second question is substantially closer to the intended application of ARS. The real result therefore carries greater evidentiary weight even though its raw accuracy is lower.

\subsection{Graphical Evidence for the Real-Data Result}
\label{ssec:comparison-graphical}
Three graphical observations should be considered together. First, synthetic accuracy is higher. Second, every real-data condition remains above the uniform baseline. Third, the real data reproduce the predicted model-complexity tradeoff: order two consistently outperforms order three despite order three having a larger contextual window.

\begin{figure}[h]
\centering
\includegraphics[width=0.8\linewidth]{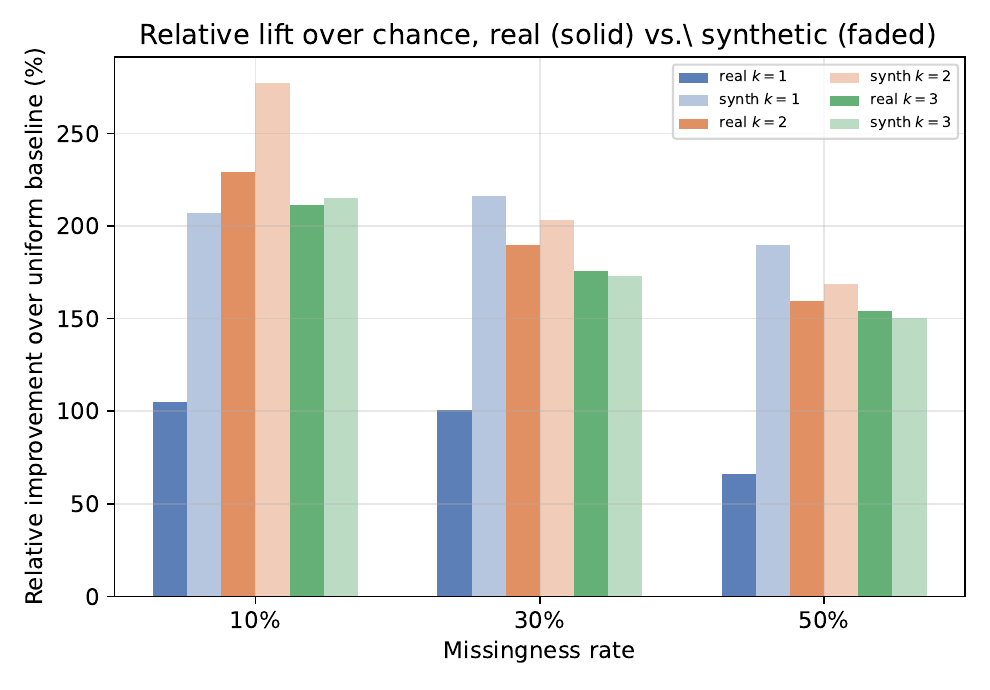}
\caption{Relative improvement over the uniform seven-symbol baseline for synthetic and real experiments. The synthetic pilot has the larger numerical lift, while the real-world experiment remains substantially above baseline across all conditions.}
\label{fig:relative-lift}
\end{figure}

This third observation links the empirical real-data results to the theoretical parameter-count analysis. For $N=7$, the parameter counts grow from $42$ to $294$ to $2058$ as order increases from one to three. The observed context coverage simultaneously falls from $100\%$ to $85.7\%$ to $62.7\%$. The empirical optimum at order two is therefore consistent with a concrete data-sparsity tradeoff rather than being a purely arbitrary numerical outcome.

\subsection{What Neither Evaluation Establishes}
\label{ssec:comparison-limits}
Neither experiment establishes musicological validity, historical authenticity, or agreement with an independently produced expert transcription. The synthetic experiment cannot establish properties of real performance because its data are generated by construction. The real experiment cannot establish expert musical correctness because its symbolic reference is generated automatically.

A stronger evaluation would require an independently annotated, properly licensed corpus, performer- or recording-disjoint splits, realistic contiguous masking and observation noise, matched baseline methods, and expert evaluation of reconstructed phrases.

\subsection{Overall Empirical Interpretation}
\label{ssec:comparison-conclusion}
The combined evidence supports a layered conclusion. The synthetic pilot validates the implementation under controlled conditions. The real-world experiment demonstrates that the same reconstruction machinery remains effective on genuine Yaman recordings. The comparison explains why the real-data accuracy is lower while showing that it remains substantially above an uninformed baseline.

Most importantly, the real-data ordering is stable across all three masking rates:
\[
0.470>0.445>0.293,
\]
\[
0.414>0.394>0.287,
\]
and
\[
0.371>0.363>0.237.
\]
Thus the real experiment does not merely produce positive accuracy; it reproduces the theoretically motivated tradeoff between contextual richness and data sparsity.

\subsection{Code Availability}
\label{ssec:code}
The reconstruction code is distributed together with the experiment configuration, fixed random seeds, figure-generation scripts, and unit tests. The repository contains the Markov estimator, exact dynamic-programming decoder, masking/evaluation loop, synthetic generator, real-audio preprocessing pipeline, and exhaustive-search verification code. The README provides the exact environment and command required to regenerate the tables and figures.

Here is the public repository for the paper: 
\begin{center}
\url{https://github.com/mathacker23/ARS_V2}
\end{center}
The repository also records the exact commit or release corresponding to the submitted manuscript. RagaVeda remains the separately cited source of the real-audio corpus.

Code availability is part of the chain of the numerical results. The paper therefore does not describe an experiment as fully reproducible unless the code, configuration, and data-processing procedure needed to reproduce it are actually available.
\newpage
\section{Algorithmic Specification}
\subsection{Training}
The training procedure consists of four logically separate stages.
\begin{enumerate}
\item Normalize the symbolic corpus and verify that every symbol belongs to $\A_R$.
\item For each candidate order $k$, count context-successor pairs $C(c,a)$.
\item Compute the Dirichlet posterior mean in equation (1).
\item Select $k$ using a training-only criterion such as BIC, then freeze the resulting model before test evaluation.
\end{enumerate}

\begin{algorithm}[h]
\caption{Training an order-$k$ raga-conditioned Markov model}
\begin{algorithmic}[1]
\Require Symbolic training corpus $D$, alphabet $\A_R$, candidate orders $K$, $\alpha>0$
\For{$k\in K$}
\State Count $C(c,a)$ for every $c\in\A_R^k$, $a\in\A_R$
\State Compute $\widehat P^{(k)}_{c,a}=(C(c,a)+\alpha)/(C(c)+N\alpha)$
\State Compute training likelihood and $\operatorname{BIC}(k)$
\EndFor
\State Select $\widehat k=\argmin_k\operatorname{BIC}(k)$
\State \Return $\widehat P^{(\widehat k)}$
\end{algorithmic}
\end{algorithm}

The algorithm does not require a neural network. This is intentional. The theorem-bearing baseline should be executable on modest hardware and should be easy for another researcher to inspect.

\subsection{Reconstruction}

Given a test fragment, the decoder initializes all admissible contexts
consistent with the observed prefix, propagates dynamic-programming scores,
rejects transitions forbidden by the grammar, and backtracks from the
highest-scoring terminal state. If no terminal state is reachable, the
decoder returns ``no feasible completion.''

\subsection{Pseudocode for exact reconstruction}

\begin{algorithm}[H]
\caption{Exact constrained reconstruction}
\begin{algorithmic}[1]

\Require Observation $O$, kernel $\widehat{P}$, constraint automaton $\mathcal{C}$

\State Initialize $V_k(z)$ for every admissible initial augmented state $z$

\For{$t=k$ to $T-1$}

    \State Set all next-layer scores to $-\infty$

    \For{each reachable state $z$}

        \For{each successor $a$ permitted by $\mathcal{C}$ and $O_{t+1}$}

            \State $z' \gets \operatorname{update}(z,a)$
            
            \State $s \gets V_t(z) + \log \widehat{P}_{c(z),a}$

            \If{$s > V_{t+1}(z')$}

                \State $V_{t+1}(z') \gets s$; store predecessor $z$

            \EndIf

        \EndFor

    \EndFor

\EndFor

\State Backtrack from the best reachable terminal state

\State \Return $\widehat{X}$ or ``no feasible completion''

\end{algorithmic}
\end{algorithm}

The exactness theorem shows that this algorithm is not a heuristic for the finite-memory model. Its limitation is computational state growth. For $N=7$ and $k=4$, there are $2401$ context states before additional grammar state is included, which is entirely manageable for short symbolic sequences. For larger $k$, longer sequences, or a rich automaton, state growth can become the dominant cost.

\subsection{When beam search is appropriate}
Beam search becomes appropriate when the score cannot be represented by a finite-memory state. A transformer decoder, for example, may condition on a continuous latent state and the entire prefix. Beam search then provides an approximation. We explicitly label it as such and evaluate sensitivity to beam width. The existence of a beam-search implementation does not alter the exactness theorem for the simpler model.

\section{Ablation and Sensitivity Analysis}
\subsection{Smoothing sensitivity}
The concentration parameter $\alpha$ controls the strength of the prior. Small $\alpha$ approaches empirical frequencies and can leave rare contexts highly variable. Large $\alpha$ pulls rows toward uniformity and can erase genuine structure. A sensitivity experiment should therefore evaluate several values, such as $0.1,0.5,1,2$, while keeping the test masks fixed.

The expected effect depends on context frequency. In a heavily sampled context, changing $\alpha$ has little effect because $C(c)\gg N\alpha$. In a rare context, the prior can dominate. This dependence is visible directly from equation (1), so the experiment can be interpreted rather than merely reported.

\subsection{Grammar sensitivity}
A hard grammar can improve accuracy if it excludes incorrect candidates without excluding the truth. It can reduce accuracy catastrophically if the encoded grammar is wrong. This asymmetry is important in cultural applications: a model that confidently enforces an incorrect rule can be less trustworthy than an unconstrained model that expresses uncertainty.

We therefore recommend a ``grammar stress test'' in which constraints are perturbed or selectively removed. Report both missing-symbol accuracy and the fraction of reconstructions that remain feasible. A model should not be praised for grammatical validity if the grammar has been made so restrictive that only one sequence is possible.

\subsection{Masking sensitivity}
Uniformly random masks are convenient but unrealistic for damaged audio. A stronger evaluation should compare isolated missing notes, contiguous gaps, burst errors, and observation noise. The exact dynamic program can handle all of these as long as the observation model is represented in the state or objective.

\subsection{Order-selection sensitivity}
Order selection should be evaluated independently of the final test score. For each training corpus, compute BIC over $k=1,2,3$ (or a larger predeclared range), record the selected order, and then evaluate that frozen model on held-out data. Reporting the distribution of selected orders across repeated training samples is more informative than reporting a single $k^*$.

A useful diagnostic is the occupancy histogram of contexts. If a high-order model contains many contexts with zero or one observation, its nominal parameter count overstates the information available for those contexts while its effective estimator variance remains high. Smoothing mitigates numerical instability but not the fundamental lack of information.

\subsection{Ablation table template}
The following table is a reporting template rather than an assertion of numerical results:
\begin{center}
\vspace{2mm}
\begin{tabular}{p{0.42\linewidth}ccc}
\toprule
Configuration & Accuracy & SD & Feasible rate\\\midrule
First-order, no grammar & -- & -- & --\\
First-order + grammar & -- & -- & --\\
Second-order + grammar & -- & -- & --\\
Third-order + grammar & -- & -- & --\\
Selected-order + smoothing & -- & -- & --\\
\bottomrule
\end{tabular}
\vspace{2mm}
\end{center}

These dashes are there for a reason and prevent fabricated values from entering a research paper. Once the repository is executed, the table can be populated automatically from the same results file used for the figures.

\subsection{Error analysis}
For every failed reconstruction, record the missing positions, local context, candidate probabilities, and the grammar rule that excluded or favored each candidate. This allows qualitative inspection of whether failures come from sparse data, incorrect grammar, ambiguous context, or genuinely non-identifiable fragments.

\section{Cultural and Scientific Limitations}
\subsection{A scale is not a raga}
The alphabet $\A_R$ is not a complete raga definition. A raga can depend on characteristic movements, phraseology, emphasis, and performance practice that are not represented by a list of pitch names. Therefore a synthetic sequence generated from a note inventory should be described as ``raga-inspired'' unless the generator has been validated against expert annotations.

\subsection{Shruti and continuous intonation}
I thought to treat the 22-shruti concept as a fixed lattice of pitch positions. That is too rigid for the present scientific claim. Historical and theoretical discussions of shruti do not justify treating every performance pitch as one of 22 globally fixed bins. The final symbolic model therefore does not force a continuous performance onto a 22-point lattice.

A future continuous model should use a tonic-normalized pitch representation and estimate uncertainty in pitch tracking. It should distinguish a measurement model from a musicological ontology: an observed frequency is an acoustic measurement, not automatically a named shruti.

\subsection{Gamaka}
Gamaka is a trajectory phenomenon. Reducing it to a single ``type'' or scalar energy discards potentially important information. A mathematically rigorous future representation should encode the curve, timing, direction, rate, and possibly performer-specific variation. Any distance between gamaka trajectories must state its invariances, such as whether time warping should count as a difference.

\subsection{Rhythm and tala}
The core model is melodic. It does not reconstruct tala, beat position, duration, or rhythmic phrasing. Since melodic and rhythmic choices can interact, a complete performance reconstruction would require a joint latent model. The absence of rhythm is a scope limitation, not evidence that rhythm is irrelevant.

\subsection{Historical authenticity}
A statistically likely completion is not necessarily the historically performed completion. This is perhaps the most important limitation. If several sequences are compatible with the observation and the chosen grammar, the model may choose one because its training corpus assigns it higher probability. That probability reflects the corpus and modeling assumptions; it does not constitute historical evidence unless independently supported.

Archival evaluation would require provenance, recording date where available, performer lineage, transcription protocol, audio-quality assessment, expert annotation, and a transparent separation between training and evaluation material. If the test recording influenced the grammar definition or model selection, the evaluation would be contaminated.

The present Yaman evaluation is deliberately weaker than that standard. Its source is a third-party RagaVeda collection without documented licensing or recording provenance, its tonic is estimated rather than annotated, and its symbolic reference is produced by the same automated pipeline used for evaluation. It should therefore be interpreted as an end-to-end feasibility check, not archival validation.

\subsection{Synthetic data}
Synthetic data are valuable because the ground truth is known. They are dangerous because the generator can make the problem easier than reality. A model trained and tested on a synthetic Markov process is naturally advantaged when the evaluation uses the same state representation. Consequently, synthetic results should be framed as implementation validation and sensitivity analysis.

\subsection{Ethical and cultural considerations}
Computational preservation should be collaborative. Musicologists, performers, archivists, and tradition bearers should participate in deciding what counts as a valid reconstruction and how uncertain outputs are presented. A system should not claim ownership over a tradition merely because it encodes a statistical representation of it. The technical framework is best understood as a tool for organizing evidence, not a replacement for practitioners.

\subsection{Reproducibility limitation}
Reproducibility needs that the repository actually contain the versioned code, configuration, environment, and exact data-generation procedure. 

\section{Discussion}
This ARS framework leads to a modest but defensible scientific claim. The core problem is mathematically tractable when melody is represented by a finite alphabet and the grammar has finite memory. Under these assumptions, the reconstruction can be solved exactly by dynamic programming. This is a meaningful result because it turns an intuitive ``fill in the missing notes'' task into a well-defined optimization problem with explicit computational complexity.

The statistical analysis also clarifies the central tradeoff. Increasing Markov order provides a richer conditional context, but the number of parameters grows as $N^k(N-1)$. In a seven-symbol alphabet, the jump from order three to order four is from 2058 to 14406 free transition parameters. Thus a higher-order model needs substantially more evidence or stronger structural assumptions.

The grammar provides a second source of information. Its benefit is strongest when the constraints are correct and informative. Its risk is model misspecification. This suggests that future systems should expose uncertainty about grammar rules rather than treating all expert-derived constraints as infallible. A Bayesian grammar model could assign posterior probabilities to competing constraint sets.

The optional HMM and VAE extensions remain scientifically useful. The HMM can capture latent phrase context, while a VAE can learn representations that are difficult to specify by hand. But the revised paper insists on a separation between representation learning and proof. A learned latent representation can improve prediction without automatically creating a theorem about grammatical validity or convergence.

The most promising direction is therefore a layered architecture: an explicit finite-state core for constraints and exact inference, augmented by learned continuous representations for acoustic detail. The interface between the two layers should be mathematically defined. For example, a neural model could propose a posterior distribution over candidate symbols while the finite-state decoder enforces hard constraints and returns the exact MAP sequence under the combined score.

The real Yaman feasibility evaluation provides a useful bridge between the theorem-bearing core and future archival work. All tested orders outperform the uniform baseline on the pipeline's own symbolic output, while order two outperforms order three at every tested missingness level. This is consistent with the predicted sparsity tradeoff, but the result should be treated as a feasibility observation rather than a general statement about Yaman or Hindustani performance. The low-confidence tonic estimates and absence of expert symbolic ground truth make the evaluation substantially weaker than an expert-annotated archival benchmark.
\newpage
\section{Conclusion}
This paper reformulates the Artificial Rosetta Stone as a principled mathematical reconstruction problem of fragmented Hindustani symbolic melodies. The main mathematical objects are a finite ragaconditional alphabet, a finite-memory constraint system, an order-$k$ Markov probability model, a Dirichlet estimator, and a constrained MAP decoder. Three results are of particular significance: 
1) The order-$k$ model has precisely $N^k(N-1)$ parameters, demonstrating the exponential cost of higher context. 

2) For fixed-length sequences with finite-memory constraints, dynamic programming is an exact optimizer of the constrained MAP objective with worst-case time complexity $O(TN^{k+1})$. 

3) Given reasonable sampling assumptions, concentration inequalities bound the error of estimated transition probabilities, while a positive score gap determines when parameter accuracy suffices to preserve the MAP sequence. Of equal significance are the claims the paper does not make. It does not suggest that a symbolic sequence completely represents a raga. It does not assert that synthetic reconstruction implies historical authenticity. 

It does not infer that a hard MAP decoder is automatically a contraction mapping. 

It does not employ an ornament-energy alignment formula as a metric without proving identity of indiscernibles and all other axioms on a clearly defined space. These limitations strengthen, rather than weaken, the research. An additional experiment the evaluation on Yaman confirms that the exact reconstruction machinery may be applied to real audio-derived symbolic sequences-order two has 0.470, 0.414, and 0.371 missing-symbol accuracy at 10\%, 30\%, and 50\% missingness, respectively, outperforming orders one and three in every condition tested. 

Given the noise in the source, tonic estimation, and symbolic ground truth, these numbers should be considered merely evidence of end-to-end feasibility and not historical or musicological validity. A follow-up study could build upon this framework by extending it to include continuous pitch, gamaka trajectories, tala, HMM phrase states, neural priors, and fully licensed archival recordings, while maintaining a rigorous mathematical foundation. The appropriate next step is a larger corpus of expert-annotated material with authentic audio degradations, cross-performer evaluation, and closely matched statistical baselines.

The Artificial Rosetta Stone should ultimately be understood as a computational inference system that organizes evidence under explicit assumptions. Its value lies not in claiming that mathematics can replace musical knowledge, but in showing exactly how formal probability, constraints, and algorithms can work alongside that knowledge without overstating what the data can prove.\newpage
\bibliographystyle{plain}
\bibliography{references.bib}

\newpage
\appendix
\section{Proofs and Technical Details}
\subsection{Proof of the parameter count}
For completeness, consider the parameter space of one transition row. It is the simplex
\[
\Delta^{N-1}=\{p\in\mathbb R^N:p_a\ge0,\ \sum_a p_a=1\}.
\]
Its affine dimension is $N-1$. The full order-$k$ kernel is the Cartesian product of $N^k$ such simplices, so its dimension is $N^k(N-1)$. This argument does not depend on the transition probabilities being strictly positive; boundary points simply correspond to zero-probability transitions.

\subsection{Proof of the Dirichlet posterior}
The Dirichlet density is proportional to $\prod_a p_a^{\alpha-1}$. The multinomial likelihood is proportional to $\prod_a p_a^{C(c,a)}$. Multiplying yields
\[
\prod_a p_a^{C(c,a)+\alpha-1},
\]
which is the kernel of a Dirichlet distribution with parameters $C(c,a)+\alpha$. The posterior mean of a Dirichlet vector with parameters $\eta_a$ is $\eta_a/\sum_b\eta_b$, giving equation (1).

The posterior mode, when all $\eta_a>1$, follows by maximizing $\sum_a(\eta_a-1)\log p_a$ subject to the simplex constraint. The Lagrangian derivative yields $p_a\propto\eta_a-1$, producing equation (2). The distinction between posterior mean and posterior mode is therefore not semantic: the formulas differ by one unit in every pseudo-count.

\subsection{Proof of KL support projection}
Let $A\subseteq\{1,\ldots,N\}$ be nonempty and let $q_a>0$. For any probability vector $p$ supported on $A$,
\[
\KL(p\|q)=\sum_{a\in A}p_a\log p_a-\sum_{a\in A}p_a\log q_a.
\]
The first term is strictly convex on the interior of the simplex, and the second is linear. Therefore the objective is strictly convex. The Lagrangian is
\[
\mathcal L(p,\eta)=\sum_{a\in A}p_a\log(p_a/q_a)+\eta\left(\sum_{a\in A}p_a-1\right).
\]
Setting derivatives to zero gives $\log(p_a/q_a)+1+\eta=0$, hence $p_a=Cq_a$. Normalization gives $C=(\sum_{b\in A}q_b)^{-1}$. Strict convexity establishes uniqueness.

This proof also exposes a boundary condition. If $q_a=0$ for an allowed $a$, then a finite KL projection can assign positive mass to that symbol only if the divergence direction is changed or smoothing has first made $q$ positive. This is another reason for keeping statistical smoothing and hard grammar conceptually separate.

\subsection{Proof of dynamic-programming optimality}
Let $S_t$ be the augmented finite state. Define $V_t(s)$ as the maximum score over all feasible prefixes ending in $s$. The Bellman optimal-substructure property follows because future score contributions depend on the past only through $s$. If two prefixes end in the same state, the one with lower score can never become optimal after the same continuation because every future continuation adds the same score to both. Therefore retaining only the best prefix per state is lossless. Repeated application of this principle gives the recurrence in equation (6), and backtracking recovers a global optimum.

\subsection{Concentration derivation}
For independent observations associated with a fixed context-symbol pair, define indicator variables $Y_i=\mathbf1\{X_i=a\}$. Then $Y_i\in[0,1]$ and $\E Y_i=p_{c,a}$. Hoeffding's inequality gives
\[
\Pp\left(\left|\frac1m\sum_{i=1}^mY_i-p_{c,a}\right|>\varepsilon\right)\le2e^{-2m\varepsilon^2}.
\]
Applying a union bound to all at most $N^{k+1}$ pairs yields equation (7). Setting the right side equal to $\delta$ and solving for $\varepsilon$ gives equation (8).

For a stationary Markov chain, the indicators are dependent. A corresponding bound can be derived using mixing inequalities, but the effective sample size is smaller than the raw number of transitions. The exact reduction depends on the mixing rate. Therefore the independent-sample result is included as a transparent baseline rather than mislabelled as a theorem for arbitrary archival sequences.

\subsection{Score stability proof}
Let $x^*$ be the unique maximizer under the true score and let $\Delta=\min_{y\ne x^*}(\ell^*(x^*)-\ell^*(y))>0$. Suppose $\sup_x|\widehat\ell(x)-\ell^*(x)|<\Delta/2$. Then for every $y\ne x^*$,
\[
\widehat\ell(x^*)-\widehat\ell(y)
>\ell^*(x^*)-\Delta/2-[\ell^*(y)+\Delta/2]
\ge0.
\]
With strict inequality in the uniform error bound, the estimated difference is positive, so $x^*$ remains the unique estimated maximizer. This establishes the proposition in Section 8.

\section{Appendix: Formal Failure Analysis of the Original Theorems}
\subsection{Failure of the stated contraction hypothesis}
Suppose $P$ is a row-stochastic matrix. By definition, $\sum_jP_{ij}=1$ for every row and $P_{ij}\ge0$. Hence
\[
\|P\|_\infty=\max_i\sum_j|P_{ij}|=1.
\]
Thus the hypothesis $\|P\|_\infty\le\tau<1$ is inconsistent with row stochasticity. This is a direct contradiction, not a subtle regularity issue.

If one instead considers a matrix $Q$ representing probability flow restricted to a proper subset of states, then $Q$ can be substochastic and may satisfy $\|Q\|_\infty<1$. But the corresponding operator is not the original transition matrix and its contraction would describe mass leakage, not necessarily convergence of a reconstruction decoder.

\subsection{Failure of argmax continuity}
Let the input space contain a scalar $u$ and two candidate outputs $x_+$ and $x_-$. Define scores $s_+(u)=u$ and $s_-(u)=-u$. The argmax output switches at $u=0$. Under the discrete metric $d(x_+,x_-)=1$, choose sequences $u_n=1/n$ and $v_n=-1/n$. Then $|u_n-v_n|=2/n\to0$ while $d(F(u_n),F(v_n))=1$. Hence $F$ is not continuous at zero and cannot be Lipschitz there.

The example is elementary, but it captures the structural issue with a MAP decoder: candidate rankings can exchange under arbitrarily small score perturbations. A contraction theorem therefore requires either a margin condition that excludes ties and near-ties or a softened operator whose output changes continuously.

\section{Appendix: Reproducibility Checklist}
Before submission, the following checklist should be satisfied.
\begin{enumerate}[leftmargin=2em]
\item The repository contains the exact code used for every numerical table and figure.
\item The random seed is fixed and printed in the experiment output.
\item Training, validation, and test sequences are generated or loaded by separate functions.
\item The test mask is generated independently of parameter fitting.
\item Any order selection is performed without using test labels.
\item Hyperparameters are listed explicitly.
\item The grammar used by the decoder is stored in a machine-readable file.
\item The README states the software environment and execution command.
\item The reported numerical values are generated from the code rather than manually copied.
\item The paper distinguishes synthetic evidence from archival evidence.
\item Confidence intervals or standard deviations are reported for repeated experiments.
\item If a public code link is unavailable, the paper says so transparently.
\end{enumerate}

\subsection{Repository structure}
\begin{verbatim}
ArtificialRosettaStone/
  README.md
  requirements.txt
  src/
    markov.py
    grammar.py
    decoder.py
    metrics.py
    generator.py
  experiments/
    config.yaml
    run_experiment.py
  results/
    results.csv
    figures/
  tests/
    test_markov.py
    test_decoder.py
\end{verbatim}

Unit tests should include probability-row normalization, Dirichlet estimates, grammar rejection, observation preservation, and comparison of dynamic programming against exhaustive enumeration on small toy cases. The last test is particularly valuable because it directly verifies the central theorem computationally: for small $N$ and $T$, exhaustive search and dynamic programming should return identical optimal scores and sequences under a deterministic tie-break rule.

\section{Appendix: Recommended Future Theoretical Work}
\subsection{Soft reconstruction operators}
A future convergence theorem could define a soft posterior operator
\[
\mathcal T_\tau(O)(x)=\frac{\exp(\ell(x;O)/\tau)}{\sum_{y\in\F(O,\R)}\exp(\ell(y;O)/\tau)},
\]
for temperature $\tau>0$. Unlike argmax, this object is a probability distribution. Under suitable influence bounds, one could investigate contraction in total variation using Dobrushin coefficients. The analysis would need to control how changing one observed symbol changes the conditional distribution over completions.

\subsection{Continuous-pitch Bayesian model}
A continuous model could define a latent pitch trajectory $\gamma_t$, an ornament process $g_t$, and an audio observation $Y$. A hierarchical factorization might take the form
\[
\Pp(Y,\gamma,g,X)=\Pp(Y\mid\gamma,g)\Pp(\gamma\mid X,g)\Pp(g\mid X)\Pp(X).
\]
This would separate the acoustic inverse problem from the symbolic grammar. Posterior inference could then combine pitch evidence with raga constraints without treating a pitch tracker as infallible.

\subsection{Uncertainty-aware grammar}
A grammar could itself be uncertain. Let $G$ be a latent grammar selected from a finite family $\{G_1,\ldots,G_J\}$. Then
\[
\Pp(X\mid O)=\sum_j\Pp(X\mid O,G_j)\Pp(G_j\mid O).
\]
This formulation prevents a single potentially incorrect rule set from producing overconfident reconstructions. It also creates a mathematically meaningful role for expert disagreement.

\subsection{Archival validation}
The present Yaman experiment is an initial feasibility bridge rather than the decisive empirical test. A stronger validating claim should use a properly licensed, attributed archival corpus where the original complete performance is known but parts are artificially masked. The masking should mimic realistic degradation, and experts should score reconstructed phrases for grammatical plausibility and stylistic fidelity. Quantitative accuracy and expert judgment should be reported separately rather than collapsed into one arbitrary score. Corpus-provided tonic and symbolic annotations should be preferred to automatic estimates where available, and performer- or recording-disjoint splits should be used to prevent leakage.

\section{Appendix: Extended Mathematical Notes}
\subsection{Initial-context estimation}
The order-$k$ likelihood requires an initial distribution $\pi$ over contexts. If the corpus supplies empirical initial contexts, a Dirichlet prior can be placed on $\pi$ in exactly the same way as for each transition row. The complete posterior then factors into the initial-context posterior and independent transition-row posteriors conditional on the observed counts.

\subsection{Stationarity is not required for the decoder}
The dynamic-programming theorem does not require the chain to be stationary. Stationarity matters for some statistical estimation arguments, not for optimization once the transition scores are fixed. This distinction is useful because musical performances may have strong nonstationary structure: opening, development, climax, and return need not share the same transition distribution.

\subsection{State augmentation}
Suppose the grammar depends on a finite phrase state $s_t$ in addition to the previous $k$ symbols. Then the augmented state is $(x_{t-k+1:t},s_t)$. If the state update is deterministic or has finitely many possibilities, dynamic programming remains exact. The complexity becomes proportional to the number of augmented states and outgoing transitions. This provides a mathematically clean bridge from the basic Markov model to an HMM-like phrase context.

\subsection{Tie-breaking}
If several candidates have equal scores, define a deterministic order on $\A_R$ and choose the lexicographically smallest maximizing sequence. This turns the set-valued MAP estimator into a function, but it does not solve the continuity problem: the selected output can still change discontinuously under arbitrarily small perturbations.

\subsection{Model misspecification}
All finite-sample guarantees above concern estimation relative to a specified model class. If the true process is not order-$k$ Markov, the estimator can converge to the best approximation in that class rather than to the true conditional distribution. This is the standard distinction between estimation error and approximation error. A complete learning-theoretic analysis should report both when possible.

\section{Appendix: Notation and Assumption Register}
For clarity, the principal assumptions used by the mathematical results are collected here. The alphabet $\A_R$ is finite with size $N$; the sequence length $T$ is fixed during reconstruction; the Markov order $k$ is finite; and every hard grammar condition used by the exact decoder has finite memory or can be represented by a finite automaton. The transition kernel is row stochastic. The Dirichlet concentration parameter satisfies $\alpha>0$.

The dynamic-programming exactness theorem is conditional only on finite-state local scoring and finite-memory constraints. It does not require stationarity, ergodicity, or a correct raga model. Those assumptions enter statistical claims instead. In particular, the concentration bound in the main text uses independent-sample reasoning. For sequential data, a mixing assumption and an effective sample size must be supplied before applying an analogous bound.

The score-stability result additionally assumes that all relevant true transition probabilities are bounded below by $p_{\min}>0$ and that the true MAP solution is separated from its nearest competitor by a positive score gap. Without a gap, arbitrarily small estimation errors can change the identity of the maximizer. This is why a parameter-estimation guarantee should never be presented as an unconditional exact-reconstruction guarantee.

The synthetic experiment assumes missing-at-random masking and a correctly specified finite symbolic generator. These assumptions are intentionally stronger than the conditions of a real archival setting. The experiment therefore validates implementation and controlled behavior, not historical authenticity. Any extension to real audio must introduce an observation model for noise, segmentation, pitch tracking, and transcription uncertainty.

Finally, the notation $R$ refers to a raga context and $\R$ to its formal constraint system only when the distinction is clear from context. All logarithms are natural logarithms. A zero probability contributes $-\infty$ to a log score and is treated as either a hard grammar exclusion or a statistically smoothed estimate according to the model specification; the two meanings are never silently conflated.

\subsection{Interpretation of guarantees}
It is useful to separate four levels of statement. At the first level, an implementation claim says that the code executes and returns a sequence. At the second level, an optimization claim says that the returned sequence is globally optimal for the mathematical objective. At the third level, a statistical claim says that the estimated objective is close to a population objective under specified assumptions. At the fourth level, a musicological claim says that the inferred sequence is a plausible or authentic representation of a musical tradition. The first three can be addressed by computer science and mathematics. The fourth requires evidence from the tradition itself.

The ARS framework in this paper establishes the second level for its finite-state core and gives conditional results at the third level. It intentionally does not collapse the fourth level into a numerical score. This distinction should remain explicit in any future presentation, abstract, poster, or software documentation. In particular, a high reconstruction accuracy on a synthetic corpus should not be described as evidence that the system has ``recovered lost music.'' It shows only that, under the chosen generator and observation process, the estimator can infer hidden symbols.

A related distinction concerns constraints. A hard grammar can guarantee that an output belongs to the encoded feasible set, but it cannot guarantee that the feasible set itself faithfully represents the tradition. In other words, constraint satisfaction is relative to the grammar. If the grammar is incomplete, an output can be mathematically valid and musically incomplete; if the grammar is wrong, an output can be mathematically valid and culturally misleading. This is why expert annotation and uncertainty-aware grammar learning are important future directions.

Finally, reproducibility should be considered part of the mathematics rather than a cosmetic appendix. A theorem is meaningful because its assumptions can be checked; a numerical result is meaningful because its computation can be repeated. The repository, configuration files, fixed seeds, and exact data-generation rules therefore form part of the evidentiary chain supporting the experimental section. The paper should be revised whenever those computational artifacts change.

\section{Appendix: Final Submission Version Notes}
The final submission should preserve the following claims exactly.

\begin{enumerate}
\item The system reconstructs a symbolic sequence under a specified probabilistic model and grammar.
\item The order-$k$ transition model has $N^k(N-1)$ free parameters before additional constraints.
\item The symmetric Dirichlet posterior mean is given by equation (1).
\item Constrained MAP reconstruction is exactly solvable by dynamic programming when the model and constraints have finite memory.
\item The worst-case dense complexity is $O(TN^{k+1})$.
\item The concentration result is conditional on its stated sampling assumptions.
\item Synthetic experiments demonstrate reproducibility and behavior under controlled missingness; they do not establish historical authenticity.
\item The personal formulation of contraction and Svara--Gamaka metric claims are deleted from inclusion in the paper as their stated proofs do not establish the conclusions.
\end{enumerate}

The strongest version of this project is the one where every definition has a clear purpose, every theorem has hypotheses that are actually satisfied, every numerical result is reproducible, and every cultural conclusion is proportional to the evidence. That standard should guide subsequent expansion to real archival recordings and continuous acoustic representations.

\subsection{Claim-to-evidence audit}
The following audit is intended to prevent the final manuscript from overstating what any component demonstrates. The symbolic reconstruction claim is supported by the explicit objective in equation (5), the feasible set definition, and the exact dynamic-programming theorem. The parameter-count claim follows algebraically from the product of simplices and does not depend on the experimental corpus. The Dirichlet estimation claim follows from conjugacy. The concentration claim is conditional on the sampling assumptions stated in Section 8 and should be weakened if the actual experiment violates them.

The computational-complexity claim is a worst-case bound. Sparse grammars can reduce the number of evaluated edges, but the bound remains a safe upper estimate for the dense order-$k$ case. If an implementation uses pruning, caching, vectorization, or parallelism, those are engineering improvements and should not be confused with a different asymptotic mathematical problem.

The cultural interpretation requires a different evidentiary standard. A raga label attached to a synthetic generator is metadata about the inspiration for the generator, not proof that the generator embodies the complete grammar of that raga. Similarly, a high symbolic accuracy score is evidence that hidden symbols were recovered under the chosen synthetic distribution; it is not evidence that a listener would regard the result as a faithful historical performance.

Here, each figure should therefore have a caption that states exactly what was measured, how many sequences were used, whether the data were synthetic or archival, and whether the displayed quantity is a mean, median, fitted parameter, or theoretical bound. Tables should identify the random seed or configuration when the values depend on stochastic computation. If a result cannot be regenerated from the repository, it should not be described as fully reproducible.

The same discipline applies to future neural extensions. A VAE reconstruction loss, attention map, or latent-space visualization may provide evidence that the model learns structure, but none is by itself a theorem of grammatical validity. If a neural decoder is constrained by a finite-state grammar, then the grammar can provide a hard validity guarantee while the neural model supplies a learned score. This division of labor is mathematically cleaner than assigning the entire guarantee to the neural architecture.

The paper is strongest when it makes this distinction explicit: probability describes what the model considers likely, constraints describe what the model permits, optimization determines which permitted candidate is selected, and empirical validation tests whether those assumptions are useful for the intended musical task. These four layers should remain separate throughout future revisions.

\vfill

\end{document}